\documentclass[12pt,a4paper]{article} 
\usepackage{amsmath,amssymb,amsfonts}
\usepackage[colorlinks]{hyperref}

\usepackage{relsize}
\usepackage{multirow}
\usepackage{diagbox}
\usepackage{booktabs}
\usepackage{thmtools,thm-restate}

\usepackage{graphics} % baraye resize e table

\usepackage{graphicx}

\usepackage{ cmll } %for \parr

\usepackage[noxy]{virginialake} %For table CFL_e for \vlinf{} and \vliinf{}

\usepackage{pifont}%
\usepackage[dvipsnames]{xcolor} %new colors

\usepackage{bussproofs}

\usepackage[bibliography=common]{apxproof}

\usepackage{tikz}
\usepackage{nicematrix}

\theoremstyle{plain} 
\newtheorem{theorem}{Theorem}

\newtheorem{lemma}[theorem]{Lemma}
\newtheorem{corollary}[theorem]{Corollary}

\theoremstyle{definition}
\newtheorem{definition}[theorem]{Definition}
\newtheorem{example}[theorem]{Example}

\newcommand{\st}{*}

\newcommand{\calL}{\mathcal{L}}

\newcommand{\FL}{\mathsf{FL_e}}
\newcommand{\F}{\mathbf{F}}

\newcommand{\BFL}{\mathbf{FL_e}}

\newcommand{\BIMALL}{\mathbf{IMALL}}

\newcommand{\BILL}{\mathbf{ILL}}

\newcommand{\BAIMALL}{\mathbf{IMALL_{w}}}

\newcommand{\BRIMALL}{\mathbf{IMALL_c}}

\newcommand{\BAILL}{\mathbf{ILLW}}

\newcommand{\BRILL}{\mathbf{ILLC}}

\newcommand{\BFLw}{\mathbf{FL_{ew}}}

\newcommand{\BFLc}{\mathbf{FL_{ec}}}
\newcommand{\LJ}{\mathbf{LJ}}

\newcommand{\LKb}{\mathbf{LK_b}}
\newcommand{\LKajab}{\mathbf{LK_!}}
\newcommand{\LJu}{\mathbf{LJ_u}}
\newcommand{\LJb}{\mathbf{LJ_b}}
\newcommand{\LJajab}{\mathbf{LJ_!}}

\newcommand{\bigast}{\mathop{\scalebox{1.5}{\raisebox{-0.2ex}{$\ast$}}}}%

\renewcommand{\phi}{\varphi}

\begin{document}

%%
%% The "title" command has an optional parameter,
%% allowing the author to define a "short title" to be used in page headers.
\title{Exponential Gaps Between Intuitionistic Linear Extended Frege Systems}
\author{Amirhossein Akbar Tabatabai\footnote{\texttt{amir.akbar@gmail.com}}\\
\small{Bernoulli Institute, University of Groningen$^*$}
}
\date{}
\maketitle
\begin{abstract}
In this paper, we establish exponential separations between Extended Frege systems for a range of intuitionistic substructural and linear logics. More precisely, for any logic $L$ below the intuitionistic logic obtained by extending $\mathbf{ILL}$ with structural rules, and any logic $M$ not contained in $L$, we construct a family of $\mathsf{FL_e}$-provable formulas that have short proofs in $M$-Frege but require proofs of exponential size in $L$-Extended Frege. The same result holds in the $!$-free settings, using $\mathbf{IMALL}$ and $\mathbf{FL_e}$ in place of $\mathbf{ILL}$. The key ingredient in proving these separations is a variant of the feasible disjunction property for $L$-Frege, which may be of independent interest.\\

\noindent \textbf{Keywords.} Proof complexity, intuitionistic linear logics, intuitionistic substructural logics, structural rules, Extended Frege systems, feasible disjunction property.
\end{abstract}
\maketitle

\section{Introduction}

Proof complexity studies the size of proofs in various proof systems. For weak proof systems for classical logic, many interesting exponential lower bounds are known, including for the cut-free classical sequent calculus \cite{krajivcekfeasible2}, resolution \cite{Haken}, cutting planes \cite{Cutting2}, and bounded-depth Frege systems \cite{krajivcekFeasible}. However, for stronger proof systems, such as Frege systems or the sequent calculus $\mathbf{LK}$, proving super-polynomial lower bounds remains a major open problem of the field. For further details on known lower bounds, the methods used to establish them, and related open problems, see \cite{Krajicek}.

The situation is markedly different for non-classical logics, where exponential lower bounds are known for many strong proof systems. A landmark result in this direction is due to Hrubeš, who established exponential lower bounds on the number of lines in Frege systems, or equivalently on the size of Extended Frege proofs, for intuitionistic logic as well as several modal logics \cite{Hrubes1,Hrubes,Hrubes2}. Building on this foundation, Jeřábek \cite{jerabek} extended Hrubeš's method to a broader class of logics, including logics with infinite branching in both superintuitionistic and modal frameworks. For an accessible survey of techniques for proving such lower bounds, see \cite{AmirProofComp}.

These results were subsequently extended to substructural and linear logics. Jalali \cite{Raheleh} introduced a modification of the hard formulas used in Hrubeš's construction, yielding formulas that are already provable in the substructural logic $\mathsf{FL_e}$ while remaining, in a feasible sense, equivalent to the original formulas over intuitionistic Extended Frege. This allowed her to transfer the exponential lower bound to any proof system whose logic includes $\mathsf{FL_e}$ but is polynomially simulated by intuitionistic Extended Frege. In this way, the lower bound was extended to a broad family of intuitionistic substructural and linear logics. More recently, this lower bound was extended to the proof size of Frege systems for some classical substructural and linear logics, including the affine linear logic $\mathsf{LLW}$. This extension relies on a feasible form of conservativity of these classical systems over their intuitionistic counterparts \cite{tabatabai2026proof}.

Despite these advances, an important question remains open. Since the hardness in Jalali's construction ultimately comes from intuitionistic Extended Frege, her argument does not distinguish the proof complexity of $\mathsf{FL_e}$-Extended Frege from that of intuitionistic Extended Frege, or of intermediate systems, on formulas already provable in $\mathsf{FL_e}$.
The goal of this paper is to close this gap. We establish exponential separations between Extended Frege systems for a broad class of substructural and linear logics. 
To state our main result more precisely, let us call a logic \emph{structural} if it is obtained by extending $\mathbf{ILL}$ with structural rules, where a \emph{structural rule} is roughly a rule that only changes the number of occurrences of formulas, without introducing or eliminating any logical connective or constant \cite{ciabattoni2008axioms,ciabattoni2012algebraic}. We show that, for any structural logic $L$ below intuitionistic logic and any logic $M$ not contained in $L$, there is a family of $\mathsf{FL_e}$-provable formulas that have polynomial-size proofs in $M$-Frege but require exponentially many lines in $L$-Frege, or equivalently, exponential-size proofs in $L$-Extended Frege. The same result holds in the $!$-free setting, with $\mathbf{IMALL}$ and $\mathbf{FL_e}$ in place of $\mathbf{ILL}$.
In particular, we obtain exponential separations between $\mathsf{ILL}$-Extended Frege and each of $\mathsf{ILLW}$-Extended Frege and $\mathsf{ILLC}$-Extended Frege, as well as between $\mathsf{ILLW}$-Extended Frege and $\mathsf{IPC}$-Extended Frege, and between $\mathsf{ILLC}$-Extended Frege and $\mathsf{IPC}$-Extended Frege. These separations already hold in the $!$-free setting, with $\mathsf{IMALL}$ and $\mathsf{FL_e}$ in place of the $\mathsf{ILL}$ base.

The central technical tool underlying our separation result is a \emph{feasible disjunction property} for Frege systems of structural logics below intuitionistic logic. A Frege system satisfies the feasible disjunction property if, from any proof of a disjunction $A \vee B$, one can construct a proof of either $A$ or $B$ with only polynomial growth in the number of lines. This notion strengthens the classical disjunction property by requiring a form of feasibility. Feasible variants of the disjunction property have played an important role in proof complexity and have been used to establish several lower bounds \cite{Pudlak,PudlakBuss,BussMints}. Our proof of the feasible disjunction property builds on a refinement of the machinery developed in \cite{tabatabai2025}, which in turn is based on the machinery introduced by Hrubeš \cite{Hrubes}.

Finally, let us illustrate the role of the feasible disjunction property in separating proof systems. To this end, we sketch the proof in the case of $\mathsf{FL_e}$ and $\mathsf{IPC}$. The key idea is to turn a lower bound for a weaker logic into a separation result. Let $\{A_n\}_{n \in \mathbb{N}}$ be a family of formulas that requires exponentially many lines in $\mathsf{FL_e}$-Frege. Define
$
B_n := A_n \vee (\top \to 1).
$
Each $B_n$ is provable in $\mathsf{IPC}$ and has a short $\mathsf{IPC}$-Frege proof. Indeed, $\top \to 1$ has an $\mathsf{IPC}$-Frege proof of constant size, after which $B_n$ has a short proof by disjunction introduction. On the other hand, suppose that $B_n$ had an $\mathsf{FL_e}$-Frege proof with small number of lines. By the feasible disjunction property for $\mathsf{FL_e}$-Frege, this proof would yield, with only a polynomial increase in number of lines, either an $\mathsf{FL_e}$-Frege proof of $A_n$ or one of $\top \to 1$. Since $\top \to 1$ is not provable in $\mathsf{FL_e}$, the former must hold. This would give an $\mathsf{FL_e}$-Frege proof of $A_n$ with small number of lines, contradicting the assumed hardness of the family ${A_n}$. Thus, the feasible disjunction property turns the lower bound for $\{A_n\}$ into an exponential separation between $\mathsf{FL_e}$-Frege and $\mathsf{IPC}$-Frege in terms of the number of lines, or equivalently, between $\mathsf{FL_e}$-Extended Frege and $\mathsf{IPC}$-Extended Frege in terms of proof size.

\section{Preliminaries}
Let $\mathcal{L}_u=\{0,1,\wedge,\vee,\st,\to\}$ denote the language for unbounded substructural logics. We obtain its bounded extension by setting $\mathcal{L}_b=\mathcal{L}_u\cup\{\top,\bot\}$, and further extend this to the language of linear logic by adding \emph{exponentials}, namely $\mathcal{L}_!=\mathcal{L}_b\cup\{!\}$. Throughout, we write $\mathcal{L}\in\{\mathcal{L}_u,\mathcal{L}_b,\mathcal{L}_!\}$ when we refer uniformly to any of these languages. $\mathcal{L}_!$-formulas are defined inductively by the grammar:
\[
F \;::=\; p \mid 0 \mid 1 \mid \top \mid \bot \mid !F \mid F_1 \wedge F_2 \mid F_1 \vee F_2 \mid F_1 \to F_2 \mid F_1 \st F_2.
\]
The sets of $\mathcal{L}_u$-formulas and $\mathcal{L}_b$-formulas are defined analogously, by restricting the available constants and connectives accordingly.
By a standard abuse of notation, we identify a language $\mathcal{L}$ with the set of all $\mathcal{L}$-formulas.
We define the derived connective $\neg$ by
$\neg A := A \to 0$. 

\emph{Multisets of formulas} are defined as usual and we assume that they are always finite. We use capital Greek letters $\Gamma, \Delta, \dots$, as well as the bar notation as in $\overline{\phi}, \overline{\psi}, \overline{A}, \overline{B}, \dots$ to refer to multisets of formulas. For any two multisets $\Gamma$ and $\Pi$, by $(\Gamma, \Pi)$ or $\Gamma \cup \Pi$, we mean the union of $\Gamma$ and $\Pi$ as multisets. If $\Gamma= \{A_1,  \dots, A_n\}$, define $\bigast \Gamma:= A_1 * \dots * A_n$, and $\bigast \emptyset :=1$.

A \emph{logic} over a language $\mathcal L$ is any collection $L$ of $\mathcal L$-formulas satisfying the following closure conditions:
\begin{itemize}
\item \emph{Substitution:} $L$ is closed under substitutions;
\item \emph{Modus Ponens:} whenever $A,A\to B\in L$, then $B\in L$;
\item \emph{Adjunction:} whenever $A,B\in L$, then $A\wedge B\in L$;
\item \emph{Necessitation:} if $\mathcal L=\mathcal L_!$, then $A\in L$ implies $!A\in L$.
\end{itemize}

For further background on substructural and linear logics, see \cite{Avron,Ono,Troelstra}.
Following \cite{Cook}, a \emph{proof system} for a logic $L$ is a polynomial-time computable function $P$ whose range is precisely $L$. If $P(\pi)=A$, then $\pi$ is called a \emph{$P$-proof} of $A$, and we write
$P\vdash^\pi A$.
Let $P$ and $Q$ be proof systems for the logics $L_P$ and $L_Q$, respectively, where $L_P \subseteq L_Q$. We say that $Q$ \emph{size-simulates} $P$ (simulates, for short), and write $P \leq Q$, if every $P$-proof can be transformed into a $Q$-proof of the same formula with only a polynomial increase in size. More precisely, for every formula $A \in \mathcal{L}$ and every $P$-proof $\pi$ of $A$, there exists a $Q$-proof $\pi'$ of $A$ such that $|\pi'| \leq |\pi|^{O(1)}$.
The proof systems $P$ and $Q$ are said to be \emph{size-equivalent} (equivalent, for short) if they are proof systems for the same logic and each simulates the other.

\subsection{Sequent Calculi}

A \emph{single-conclusion sequent} (sequent, for short) over $\mathcal{L}$ is an expression of the form $\Gamma \Rightarrow \Delta$, where $\Gamma$ and $\Delta$ are finite \emph{multisets} of $\mathcal{L}$-formulas and $\Delta$ contains at most one $\mathcal{L}$-formula. The multiset $\Gamma$ is called the \emph{antecedent}, and $\Delta$ the \emph{succedent} of $S$. We also write $A \Leftrightarrow B$ for the pair of sequents $A \Rightarrow B$ and $B \Rightarrow A$. For a sequent $S=(\Gamma \Rightarrow \Delta)$, we define its \emph{interpretation} by $I(S)=(\bigast \Gamma \to \Delta)$ if $\Delta$ is a singleton, and $I(S)=(\bigast \Gamma \to 0)$, if $\Delta$ is empty.

\vspace{2pt}
\noindent A \emph{single-conclusion rule} (\emph{rule}, for short) over $\mathcal{L}$ is an expression 
\begin{center}
\AxiomC{$\mathcal{S}_1 \ldots \mathcal{S}_n$}
\UnaryInfC{$\mathcal{S}_0$}
\DisplayProof \qquad 
\end{center}
where $\mathcal{S}_i$'s are sequents over $\mathcal{L}$. We call $\mathcal{S}_1, \ldots, \mathcal{S}_n$ the \emph{premises} and $\mathcal{S}_0$ the \emph{conclusion} of the rule. An \emph{axiom} is a rule with no premises.

A \emph{sequent calculus} (or simply a \emph{calculus}) over $\mathcal{L}$ is a \emph{finite} set of rules over $\mathcal{L}$. Denote the language of a calculus $G$ by $\mathcal{L}_G$. For a sequent calculus $G$ and a finite set $X$ of rules, by $G+X$ or $GX$, we mean the  calculus consisting of all the rules in $G \cup X$.
We recall the standard sequent calculus $\mathbf{FL_e}$ in Table~\ref{Fig: sequent calculus}. We also consider the calculi introduced in Table~\ref{Fig: linear Seq cal}, obtained by adding the structural rules from Table~\ref{fig: structural rules} to $\mathbf{FL_e}$.

A \emph{(dag-like) $G$-proof} is defined as follows. Let $G$ be a sequent calculus and let $\mathcal{S}$ be a set of single-conclusion sequents. A $G$-proof $\pi$ of a sequent $S$ from $\mathcal{S}$ is a finite sequence
$\pi := S_1,\ldots,S_m$
of sequents such that $S_m = S$ and each $S_i$ is either an element of $\mathcal{S}$ or follows from earlier sequents in the sequence by an application of a rule of $G$. 
A $G$-proof is called \emph{tree-like} if each sequent is used as a premise of at most one rule in the proof.
We write $\mathcal{S} \vdash_G^\pi S$ when $\pi$ is a $G$-proof of $S$ from $\mathcal{S}$, and we write $\mathcal{S} \vdash_G S$ if such a proof exists. In the special case $\mathcal{S}=\varnothing$, we write $G \vdash S$. Two formulas $A$ and $B$ are said to be \emph{provably equivalent} if the sequent $A \Leftrightarrow B$ is derivable in $G$. The \emph{size} of a formula $A$ or a proof $\pi$, denoted $|A|$ and $|\pi|$, is the total number of symbols occurring in it. The \emph{number of lines} of a proof $\pi$, denoted $l(\pi)$, is the number of formulas appearing in $\pi$.

%The proof $\pi$ is called \emph{tree-like} if every sequent $S_j$ is used at most once as a premise of an inference rule.

For each sequent calculus $G \supseteq \mathbf{FL_e}$, we define the associated logic $L_G$ by
$L_G := \{A \in \mathcal{L} \mid G \vdash \, \Rightarrow A\}$. It is easy to see that $G \vdash S$ iff $I(S) \in L_G$, for any sequent $S$.
Throughout the paper, we denote calculi in boldface (e.g.\ $\mathbf{FL_e}$) and their associated logics in sans-serif (e.g.\ $L_{\mathbf{FL_e}}=\mathsf{FL_e}$). By abuse of notation, we use the same name for a calculus and its corresponding logic when they are considered over an extension of the underlying language. For instance, we denote the calculus $\mathbf{FL_e}$ over the language $\mathcal{L}_!$ by the same symbol $\mathbf{FL_e}$.

For the logic $L$ over $\calL_u$, we define its consequence relation $\vdash_{L}$ by stipulating that
\[
\Gamma \vdash_{L} A
\quad \text{iff} \quad
\{\, \Rightarrow \gamma \mid \gamma \in \Gamma \cup L\} \vdash_{\mathbf{FL_e}} \, \Rightarrow A,
\]
for every $\Gamma \cup \{A\} \subseteq \mathcal{L}$. Similarly, we define $\vdash_L$ for logics over $\mathcal{L}_b$ (resp. $\mathcal{L}_!$) using $\BIMALL$ (resp. $\BILL$) in place of $\mathbf{FL_e}$.

\begin{table}[t]
\centering
\renewcommand{\arraystretch}{1.2}

\begin{tabular}{ccc}
\AxiomC{$ $}
\RightLabel{\scriptsize(id)}
\UnaryInfC{$A \Rightarrow A$}
\DisplayProof \qquad
&
\AxiomC{$ $}
\RightLabel{\scriptsize($1w$)}
\UnaryInfC{$ \Rightarrow 1 $}
\DisplayProof \qquad
&
\AxiomC{$ $}
\RightLabel{\scriptsize($0w$)}
\UnaryInfC{$0 \Rightarrow $}
\DisplayProof
\\[2.5ex]
\end{tabular}

\begin{tabular}{cc}
\AxiomC{$\Gamma \Rightarrow \Delta$}
\RightLabel{\scriptsize $(1w)$}
\UnaryInfC{$\Gamma, 1 \Rightarrow \Delta$}
\DisplayProof \qquad
&
\AxiomC{$\Gamma \Rightarrow $}
\RightLabel{\scriptsize($0w$)}
\UnaryInfC{$\Gamma \Rightarrow 0 $}
\DisplayProof \qquad
\\[2.5ex]
\end{tabular}

\begin{tabular}{cc}

\AxiomC{$\Gamma, A_i \Rightarrow \Delta$}
\RightLabel{\scriptsize($L\wedge_i$)}
\UnaryInfC{$\Gamma, A_1\wedge A_2 \Rightarrow \Delta$}
\DisplayProof
&
\AxiomC{$\Gamma \Rightarrow A$}
\AxiomC{$\Gamma \Rightarrow B$}
\RightLabel{\scriptsize($R\wedge$)}
\BinaryInfC{$\Gamma \Rightarrow A\wedge B$}
\DisplayProof
\\[3ex]

\AxiomC{$\Gamma,A \Rightarrow \Delta$}
\AxiomC{$\Gamma,B \Rightarrow \Delta$}
\RightLabel{\scriptsize($L\vee$)}
\BinaryInfC{$\Gamma,A\vee B \Rightarrow \Delta$}
\DisplayProof
&
\AxiomC{$\Gamma \Rightarrow A_i$}
\RightLabel{\scriptsize($R\vee_i$)}
\UnaryInfC{$\Gamma \Rightarrow A_0\vee A_1$}
\DisplayProof
\\[3ex]

\AxiomC{$\Gamma,A,B \Rightarrow \Delta$}
\RightLabel{\scriptsize($L*$)}
\UnaryInfC{$\Gamma,A*B \Rightarrow \Delta$}
\DisplayProof
&
\AxiomC{$\Gamma \Rightarrow A$}
\AxiomC{$\Sigma \Rightarrow B$}
\RightLabel{\scriptsize($R*$)}
\BinaryInfC{$\Gamma,\Sigma \Rightarrow A*B$}
\DisplayProof
\\[3ex]

\AxiomC{$\Gamma \Rightarrow A$}
\AxiomC{$\Sigma,B \Rightarrow \Delta$}
\RightLabel{\scriptsize($L\to$)}
\BinaryInfC{$\Gamma,\Sigma,A\to B \Rightarrow \Delta$}
\DisplayProof
&
\AxiomC{$\Gamma,A \Rightarrow B$}
\RightLabel{\scriptsize($R\to$)}
\UnaryInfC{$\Gamma \Rightarrow A\to B$}
\DisplayProof
\\[3ex]

\end{tabular}

\begin{tabular}{c}

\AxiomC{$\Gamma \Rightarrow A$}
\AxiomC{$\Sigma,A \Rightarrow \Delta$}
\RightLabel{\scriptsize(cut)}
\BinaryInfC{$\Gamma,\Sigma \Rightarrow \Delta$}
\DisplayProof
\end{tabular}

\vspace{5pt}
\caption{Sequent calculus $\mathbf{FL_e}$.}
\label{Fig: sequent calculus}
\end{table}

\begin{table}[ht]
\centering
\small
\renewcommand{\arraystretch}{1.8}
\setlength{\tabcolsep}{10pt}

\begin{tabular}{|l|c|c|}
\hline
\textbf{Axioms} $(\bot),(\top)$ &
$\Gamma \Rightarrow \top$ \ $(\top)$ &
$\Gamma,\bot \Rightarrow \Delta$ \ $(\bot)$ \\
\hline

\textbf{Weakening} $(w)$ &
\scalebox{0.95}{
  \AxiomC{$\Gamma \Rightarrow \Delta$}
  \RightLabel{$(Lw)$}
  \UnaryInfC{$\Gamma,A \Rightarrow \Delta$}
  \DisplayProof
} &
\scalebox{0.95}{
  \AxiomC{$\Gamma \Rightarrow$}
  \RightLabel{$(Rw)$}
  \UnaryInfC{$\Gamma \Rightarrow A$}
  \DisplayProof
} \\
\hline

\textbf{Contraction} $(c)$ &
\multicolumn{2}{c|}{
\scalebox{0.95}{
  \AxiomC{$\Gamma,A,A \Rightarrow \Delta$}
  \RightLabel{$(Lc)$}
  \UnaryInfC{$\Gamma,A \Rightarrow \Delta$}
  \DisplayProof
}} \\
\hline

\multirow{2}{*}{\textbf{Exponential rules} $(exp)$} &
\scalebox{0.95}{
  \AxiomC{$!\Gamma \Rightarrow A$}
  \RightLabel{$(R!)$}
  \UnaryInfC{$!\Gamma \Rightarrow !A$}
  \DisplayProof
} &
\scalebox{0.95}{
  \AxiomC{$\Gamma,A \Rightarrow \Delta$}
  \RightLabel{$(L!)$}
  \UnaryInfC{$\Gamma,!A \Rightarrow \Delta$}
  \DisplayProof
} \\
\cline{2-3}
&
\scalebox{0.95}{
  \AxiomC{$\Gamma \Rightarrow \Delta$}
  \RightLabel{$(W!)$}
  \UnaryInfC{$\Gamma,!A \Rightarrow \Delta$}
  \DisplayProof
} &
\scalebox{0.95}{
  \AxiomC{$\Gamma,!A,!A \Rightarrow \Delta$}
  \RightLabel{$(C!)$}
  \UnaryInfC{$\Gamma,!A \Rightarrow \Delta$}
  \DisplayProof
} \\
\hline
\end{tabular}

\caption{Other rules}
\label{fig: structural rules}
\end{table}

\begin{table}[ht]
\centering
\small
\renewcommand{\arraystretch}{1.5}
\setlength{\tabcolsep}{8pt}

\begin{tabular}{|c|c|c|c|c|}
\hline
 & $G$ & $G + \{w\}$ & $G + \{c\}$ & $G + \{w,c\}$ \\ \hline
$G$ & $\BFL$ & $\BFLw$ & $\BFLc$ & $\LJu$ \\ \hline
$G+ \{(\bot),(\top)\}$ & $\BIMALL$ & $\BAIMALL$ & $\BRIMALL$ & $\LJb$ \\ \hline
$G+ \{exp\}$ & $\BILL$ & $\BAILL$ & $\BRILL$ & $\LJajab$ \\ \hline
\end{tabular}

\caption{Overview of sequent calculi.}
\label{Fig: linear Seq cal}
\end{table}

%We recall the sequent calculus $\LK$ for classical propositional logic $\CPC$ over the language $\mathcal{L}_p=\{\top,\bot,\wedge,\vee,\to\}$:
%\[
%\LK := \{(id),(\top),(\bot),(Lw),(Rw),(Lc),(Rc),(L\circ),(R\circ),(cut)\},
%\]
%where $\circ \in \{\wedge,\vee,\to\}$ and the rules are those given in Tables~\ref{Fig: sequent calculus} and~\ref{fig: structural rules}. The restriction of $\LK$ to single-conclusion sequents is denoted by $\LJ$ and corresponds to the standard sequent calculus for intuitionistic propositional logic $\IPC$.

%It is a routine but essential observation that proofs in any two of the calculi $\LK$, $\LKu$, $\LKb$, and $\LKajab$ over a common fragment of the language can be translated from one system to another with at most a polynomial increase in size and number of inference steps; the same holds for $\LJ$. This relies on the fact that, in the presence of weakening and contraction, the connective $*$ behaves equivalently to $\wedge$, while $0$ and $1$ correspond to $\bot$ and $\top$, respectively. Moreover, the modalities $!$ and $?$ can be interpreted as the identity operation, i.e.\ $!A := ?A := A$, which ensures that all rules of the extended systems remain admissible under this interpretation.

\subsection{Frege Systems}

We now introduce Frege systems for substructural and linear logics. An \emph{inference system} $F$ over $\mathcal{L}$ consists of rules of the form $A_1,\ldots,A_n / A$, where all $A_i$ and $A$ are formulas in $\mathcal{L}$.
A \emph{dag-like $F$-proof} (or simply an $F$-proof) of a formula $A$ from a set of formulas $\mathcal{A}$ is a finite sequence $\pi := A_1,\ldots,A_m$ of $\mathcal{L}$-formulas such that $A_m=A$ and each $A_i$ is either an element of $\mathcal{A}$ or obtained from earlier formulas by an instance of a rule of $F$. 
%The proof $\pi$ is called \emph{tree-like} if every line is used at most once as a premise of an inference rule. 
Each $A_i$ is called a \emph{line} of $\pi$, and we write $l(\pi)$ for the number of lines; clearly $l(\pi)\le |\pi|$.
We write $\mathcal{A} \vdash_F^{\pi} A$ if $\pi$ is an $F$-proof of $A$ from $\mathcal{A}$, and $\mathcal{A} \vdash_F A$ if such a proof exists. When $\mathcal{A}=\emptyset$, we simply write $F \vdash A$, and the set of all provable formulas is called the \emph{logic induced by $F$}.

Let $L \subseteq M$ be two logics and $P$ and $Q$ be either an inference system or a sequent system for $L$ and $M$, respectively. We say that $Q$ \emph{line-simulates} $P$, and write $P \leq_l Q$, if every $P$-proof can be transformed into a $Q$-proof of the same formula with only a polynomial increase in number of lines. More precisely, for every formula $A \in \mathcal{L}$ and every $P$-proof $\pi$ of $A$, there exists a $Q$-proof $\pi'$ of $A$ such that $l(\pi') \leq l(\pi)^{O(1)}$.
The proof systems $P$ and $Q$ are said to be \emph{line-equivalent} if they are proof systems for the same logic and each line-simulates the other.  

Let $L$ be a substructural or linear logic over $\mathcal{L}$. A finite inference system $F$ is called a \emph{Frege system} for $L$ (an $L$-$\F$ system) if it is sound, i.e.\ $F \vdash A$ implies $A \in L$, and strongly complete, i.e.\ $A_1,\ldots,A_n \vdash_L A$ implies $A_1,\ldots,A_n \vdash_F A$. A Frege system is called \emph{standard} if also $A_1,\ldots,A_n \vdash_F A$ implies $A_1,\ldots,A_n \vdash_L A$. Following \cite{jerabek}, we assume all Frege systems are standard.

As examples, Table \ref{Fig: FL_e} presents a Frege system for $\mathsf{FL_e}$. By adding the corresponding axioms from Table \ref{Fig: linear}, we obtain Frege systems for the logics whose sequent calculi are presented in Table \ref{Fig: linear Seq cal}.
\begin{table}[h]
\centering
\normalsize
\renewcommand{\arraystretch}{0.95}
\begin{tabular}{|c|c|}
\hline
(id) & $A \rightarrow A$ \\
\hline
(pf) & $(A \rightarrow B) \rightarrow \big((C \rightarrow A) \rightarrow(C \rightarrow B) \big)$  \\
\hline
(per) & $\big(A \rightarrow(B \rightarrow C)\big) \rightarrow \big(B \rightarrow(A \rightarrow C) \big)$  \\
\hline
$(* \wedge)$ & $\big((A \wedge 1) * (B \wedge 1)\big) \rightarrow(A \wedge B)$ \\
\hline
$(\wedge \! \to)_1$ & $(A \wedge B) \rightarrow A$ \\
\hline
$(\wedge \! \to)_2$ & $(A \wedge B) \rightarrow B$  \\
\hline
$(\rightarrow \! \wedge)$ & $\big((A \rightarrow B) \wedge(A \rightarrow C)\big) \rightarrow \big(A \rightarrow(B \wedge C) \big)$ \\
\hline
$(\rightarrow \! \vee)_1$ & $A \rightarrow(A \vee B)$  \\
\hline
$(\rightarrow \! \vee)_2$ & $B \rightarrow(A \vee B)$  \\
\hline
$(\vee \! \rightarrow)$ & $\big((A \rightarrow C) \wedge(B \rightarrow C)\big) \rightarrow (A \vee B \rightarrow C)$ \\
\hline
$(\rightarrow \! *)$ & $B \rightarrow(A \rightarrow (A \st B))$  \\
\hline
$(* \! \rightarrow)$ & $\big(B \rightarrow(A \rightarrow C)\big) \rightarrow((A \st B) \rightarrow C)$  \\
\hline
(1) & $1$  \\
\hline
$(1 \! \rightarrow)$ & $1 \rightarrow(A \rightarrow A)$  \\
\hline
$(\mathrm{mp})$ & $\vliinf{}{}{B}{A}{A \to B}$ \\
\hline
$(\operatorname{adj}_u)$ & $\vlinf{}{}{A \wedge 1}{A}$ \\
\hline
\end{tabular}
\vspace{3pt}
\caption{The system $\FL$-$\F$.}
\label{Fig: FL_e}
\end{table}

\begin{table}[tbh]
\centering
\normalsize
\renewcommand{\arraystretch}{0.95}
\begin{tabular}{|c|c|}
\hline
$(\text{top})$ & $A \to \top$\\
\hline
$(\text{bot})$ & $\bot \to A$ \\
\hline
$(\text{dn})$ & $\neg \neg A \to A$ \\
\hline
$(w)$ & $A \to (B \to A)$ \\
\hline
$(c)$ & $(A \to (A \to B)) \to (A \to B)$ \\
\hline
$(!w)$ & $A \to (!B \to A)$\\
\hline
$(!c)$ & $(!A \to (!A \to B)) \to (!A \to B)$\\
\hline
$(!\text{K})$ & $!(A \to B) \to (!A \to !B)$\\
\hline
$(!\text{T})$ & $!A \to A$\\
\hline
$(!\text{4})$ & $!A \to !!A$\\
\hline
$(\text{nec})$ & $\vlinf{}{}{!A}{A}$\\
\hline
\end{tabular}
\vspace{3pt}
\caption{Additional axioms and rules.}
\label{Fig: linear}
\end{table}

\begin{theorem}\cite[Lemma 3.11]{Raheleh}\label{Lem: Equivalence Of Frege}
Let $L \subseteq M$ be logics, let $P$ be a Frege system for $L$, and let $Q$ be a Frege system for $M$. Then, $P \leq_l Q$ and $P \leq Q$. Consequently, all Frege systems for any logic are both size-equivalent and line-equivalent.
\end{theorem}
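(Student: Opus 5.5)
The plan is the standard Reckhow-style simulation argument, adapted to substructural Frege systems: each rule of $P$ is a derivable rule of $Q$ with a fixed derivation, and we replay that derivation under substitution. Since $P$ is finite, write its rules as $R_j = A^j_1,\ldots,A^j_{n_j}/A^j$ for $j\le k$, where the $A^j_i, A^j$ are formulas in finitely many propositional variables. Soundness of $P$ in the standard sense gives $A^j_1,\ldots,A^j_{n_j}\vdash_L A^j$. The consequence relation $\vdash_L$ is defined via $\mathbf{FL_e}$ (resp. $\BIMALL$, $\BILL$) from $L$ together with the hypotheses, and $L\subseteq M$. Hence $A^j_1,\ldots,A^j_{n_j}\vdash_M A^j$. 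Strong completeness of $Q$ then gives a fixed $Q$-derivation $\sigma_j$ of $A^j$ from $A^j_1,\ldots,A^j_{n_j}$. Let $c:=\max_j l(\sigma_j)$ and let $d$ bound the sizes of the formulas occurring in the $\sigma_j$; both are constants depending only on $P$ and $Q$.

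Next, I would check that $Q$-derivations are closed under substitution. This is immediate because every rule of an inference system is closed under substitution: a line is an instance of a rule schema, and a substitution instance of an instance is again an instance. So if $\tau$ is a substitution, $\sigma_j\tau$ is a $Q$-derivation of $A^j\tau$ from $A^j_1\tau,\ldots,A^j_{n_j}\tau$. If $\sigma_j$ uses variables not occurring in the rule, I would substitute for them a fixed constant such as $1$ (or any fixed variable), so that no blow-up comes from them.

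Given a $P$-proof $\pi = B_1,\ldots,B_m$, I would build the $Q$-proof $\pi'$ by replacing each line $B_i$, obtained by $R_j$ under substitution $\tau$, with $\sigma_j\tau$. Its hypothesis lines are exactly earlier lines $B_{i'}$, so they can be dropped or simply repeated. Then $l(\pi')\le c\cdot m$, which gives $P\le_l Q$ with a linear increase in the number of lines. For size, every formula in $\sigma_j\tau$ is of the form $E\tau$ with $|E|\le d$. Each variable of $E$ also occurs in the rule's premises or conclusion, and is therefore mapped by $\tau$ to a subformula of some $B_{i'}$ or of $B_i$, each of size at most $|\pi|$. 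Hence $|E\tau|\le d\,|\pi|$, and $|\pi'|\le c\,d\,m\,|\pi| = O(|\pi|^2)$, which gives $P\le Q$. Taking $L=M$ in both directions yields the final sentence about equivalence.

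I expect the main obstacle to be the step deriving $A^j_1,\ldots\vdash_M A^j$ from the same consequence for $L$. This requires care with the paper's particular definition of $\vdash_L$, namely sequent derivability in the base calculus with all of $L$ added as initial sequents $\Rightarrow\gamma$. Monotonicity in the set of initial sequents handles it, since $L\subseteq M$ and both relations use the same base calculus over the same language. In the $!$ case one must also ensure that necessitation is not silently applied to hypotheses. This is fine here because $\vdash_L$ is defined uniformly via the base calculus, so the same argument applies verbatim. Beyond this point, the substitution bookkeeping, in particular the handling of extra variables in $\sigma_j$, is routine.
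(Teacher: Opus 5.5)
Your proposal is correct and is essentially the standard Reckhow-style argument behind the cited result; the paper gives no proof of its own and defers to Jalali's Lemma 3.11. In that argument, standardness of $P$, monotonicity of $\vdash$ in the logic, and strong completeness of $Q$ turn each $P$-rule into a fixed $Q$-derivation, which is then replayed under substitution with linear overhead in lines and quadratic overhead in size. One small remark: when $L$ and $M$ are over different languages (e.g.\ $L$ over $\mathcal{L}_u$ and $M=\mathsf{LJ_!}$, as used later in the paper), the base calculi defining $\vdash_L$ and $\vdash_M$ differ, but monotonicity still holds because $\BFL\subseteq\BIMALL\subseteq\BILL$.
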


Using Theorem \ref{Lem: Equivalence Of Frege}, any two Frege systems for a logic $L$ are equivalent, and we refer to \emph{the} Frege system for $L$, denoted $L$-$\F$. There is a similar and stronger notion of an Extended Frege system for a logic $L$, whose proof size is polynomially related to the number of lines in the corresponding Frege system. For simplicity, we do not define these systems here and instead prove our separation results in terms of the number of lines in Frege systems.

Finally, let $P$ be either a sequent calculus or a Frege system. We say that $P$ has the \emph{feasible disjunction property} if, for any $P$-proof $\pi$ of $A \vee B$, there is a $P$-proof $\sigma$ of either $A$ or $B$ such that
$
l(\sigma) \leq l(\pi)^{O(1)}.
$

\section{Structural Calculi}

In this section, we define structural rules and structural calculi. We essentially follow the framework of \cite{ciabattoni2008axioms,ciabattoni2012algebraic}. However, we use a more restrictive notion of structural rules, allowing free context wherever possible. This restriction is motivated by our desire for each structural rule to be equivalent to an axiom.

\begin{definition}
Let $\Gamma_i$'s, $\Delta_i$'s, and $\Pi_j$'s be pairwise disjoint families of pairwise distinct multiset variables, for $1 \leq i \leq n$ and $1 \leq j \leq m$. Assume  $\bar{\mu}_{ir}$'s, $\bar{\nu}_{js}$'s, $\bar{\rho}_{js}$'s, and, $\bar{\mu}$ are multisets of \emph{atomic} formulas and $\nu$ is an \emph{atom}, where $1 \leq r \leq k_i$ and $1 \leq s \leq l_j$. If we choose the value $0$ for $n$, $m$, $k_i$, or $l_j$, we mean $i$, $j$, $r$, or $s$ range over the empty set, respectively, and if there is no fear of confusion, we omit the domain of these indices. A \emph{structural rule} is a rule that has one of the following forms:\\

\noindent $\bullet$
\emph{left structural}:
\small\begin{center}
 \AxiomC{$\{\Pi_j , \bar{\nu}_{js} \Rightarrow \bar{\rho}_{js} \mid 1 \leq j \leq m, 1 \leq s \leq l_j \}$} 
 \AxiomC{$\{\Gamma_i , \bar{\mu}_{ir} \Rightarrow \Delta_i \mid 1 \leq i \leq n, 1 \leq r \leq k_i \}$}
 \BinaryInfC{$\Pi_1, \dots, \Pi_m, \Gamma_1, \dots, \Gamma_n, \bar{\mu} \Rightarrow \Delta_1, \dots, \Delta_n $}
 \DisplayProof
\end{center}
\normalsize where 
$\bar{\rho}_{js}$ has at most one formula, for each $j$ and $s$. As the rule is single-conclusion, at most one of $\Delta_i$'s in each instance of the rule can be substituted by a formula and the rest are empty. If $n=0$ (resp. $m=0$), there is no premise of the right (resp. left) branch and the conclusion is of the form $(\Pi_1, \dots, \Pi_m, \bar{\mu} \Rightarrow \,)$ (resp. $\Gamma_1, \dots, \Gamma_n, \bar{\mu} \Rightarrow \Delta_1, \dots, \Delta_n$). Note that if $m=n=0$, the rule has no premise and its conclusion is $(\bar{\mu} \Rightarrow \,)$. \\

\noindent $\bullet$
\emph{right structural}:
\small\begin{center}
 \AxiomC{$\{\Gamma_i , \bar{\mu}_{ir} \Rightarrow \bar{\nu}_{ir} \mid 1 \leq i \leq n, 1 \leq r \leq k_i\}$}
 \UnaryInfC{$\Gamma_1, \dots, \Gamma_n, \bar{\mu} \Rightarrow \nu $}
 \DisplayProof
\end{center}
\normalsize where $\bar{\nu}_{ir}$ has at most one formula, for each $i$ and $r$. Note that if $n=0$, the rule has no premise and its conclusion is $(\bar{\mu} \Rightarrow \nu)$. 
    
\end{definition}

\begin{example}\label{exam StructuralRules}
The following rules are examples of structural rules:

\begin{center}
\begin{tabular}{c c c c}

% (i) and (o)
\AxiomC{$\Gamma \Rightarrow \Delta$}
\RightLabel{{\footnotesize $(Lw)$}}
\UnaryInfC{$\Gamma, p \Rightarrow \Delta$}
\DisplayProof
&
\AxiomC{$\Gamma \Rightarrow $}
\RightLabel{{\footnotesize $(Rw)$}}
\UnaryInfC{$\Gamma \Rightarrow p$}
\DisplayProof
&
% (c) and (exp)
\AxiomC{$\Gamma, p, p \Rightarrow \Delta$}
\RightLabel{{\footnotesize $(Lc)$}}
\UnaryInfC{$\Gamma, p \Rightarrow \Delta$}
\DisplayProof
&
\AxiomC{$\Gamma, p \Rightarrow \Delta$}
\RightLabel{{\footnotesize $(\mathrm{exp})$}}
\UnaryInfC{$\Gamma, p, p \Rightarrow \Delta$}
\DisplayProof
\\[1em]
\end{tabular}

\begin{tabular}{c c}
% (anl-knot^n_m) alone
\AxiomC{$\Gamma, \overbrace{p, \ldots, p}^{m} \Rightarrow \Delta$}
\RightLabel{{\footnotesize $(\mathrm{knot}^n_m)$}}
\UnaryInfC{$\Gamma, \underbrace{p, \ldots, p}_{n} \Rightarrow \Delta$}
\DisplayProof
&
\AxiomC{$\{\Gamma, p_{i_1}, \ldots, p_{i_m} \Rightarrow \Delta\}_{i_1,\ldots,i_m \in \{1,\ldots,n\}}$}
\RightLabel{{\footnotesize $(\mathrm{anl}\text{-}\mathrm{knot}^n_m)$}}
\UnaryInfC{$\Gamma, p_1, \ldots, p_n \Rightarrow \Delta$}
\DisplayProof
\\
\end{tabular}

\begin{tabular}{c c}
% (min) and (mix)
\AxiomC{$\Gamma, p \Rightarrow \Delta$}
\AxiomC{$\Gamma, q \Rightarrow \Delta$}
\RightLabel{{\footnotesize $(\mathrm{min})$}}
\BinaryInfC{$\Gamma, p, q \Rightarrow \Delta$}
\DisplayProof
&
\AxiomC{$\Pi \Rightarrow$}
\AxiomC{$\Gamma, p \Rightarrow \Delta$}
\RightLabel{{\footnotesize $(\mathrm{mix})$}}
\BinaryInfC{$\Pi, \Gamma, p \Rightarrow \Delta$}
\DisplayProof\\[1em]
\end{tabular}

\begin{tabular}{ c}
\AxiomC{$\Pi \Rightarrow p$}
\AxiomC{$\Pi, p \Rightarrow $}
\RightLabel{{\footnotesize $(wc)$}}
\BinaryInfC{$\Pi \Rightarrow $}
\DisplayProof
\end{tabular}
\end{center}
The first three in the first line are the usual structural rules in the single-conclusion setting. 
The rules $(\mathrm{knot}^n_m)$ and $(\mathrm{anl}\text{-}\mathrm{knot}^n_m)$ are different rule versions of the axiom $p^n \Rightarrow p^m$ that generalizes both $(Lw)$ ($m=0$ and $n=1$) and $(Lc)$  ($m=2$ and $n=1$).
\end{example}

\begin{definition}\label{def: structural calculus}
A calculus $G$ over $\mathcal{L}$ is called  \emph{structural} if it extends the calculus
\begin{itemize}
    \item 
$\mathbf{FL_e}$, if $\mathcal{L}=\mathcal{L}_u$,
    \item 
$\mathbf{IMALL}$, if $\mathcal{L}=\mathcal{L}_b$,
    \item 
$\mathbf{ILL}$, if $\mathcal{L}=\mathcal{L}_!$,
\end{itemize}
by a finite set of structural rules. A logic is called structural iff it has a structural calculus.
\end{definition}

\begin{example}
The systems $\mathbf{FL_e}$, $\mathbf{FL_{ew}}$, $\mathbf{FL_{ec}}$, and $\LJb$ are all structural calculi over $\mathcal{L}_u$. Similarly, replacing the base system $\mathbf{FL_e}$ with $\mathbf{IMALL}$ and $\mathbf{ILL}$, respectively, yields structural calculi over the languages $\mathcal{L}_b$ and $\mathcal{L}_!$.
\end{example}

\begin{lemma}\label{lem: RuleToAxiom}
Every structural rule is equivalent to an axiom over $\mathbf{FL_e}$.
\end{lemma}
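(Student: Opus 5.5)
We will prove the lemma by assigning to every structural rule $r$ an explicit sequent axiom $A_r$, and then checking that $\mathbf{FL_e}+r$ and $\mathbf{FL_e}+A_r$ derive each other's rule. Here an axiom is a premise-free rule $\Rightarrow A_r$, closed under substitution. The natural candidate is read off from the rule by replacing each context variable with a fresh atom and internalizing the premises as hypotheses.

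For a \emph{right structural} rule with premises $\Gamma_i,\bar\mu_{ir}\Rightarrow\bar\nu_{ir}$, we introduce fresh atoms $g_i$ for $\Gamma_i$. The premise of the axiom corresponding to index $i$ is $g_i \to \bigwedge_r I'(\bar\mu_{ir}\Rightarrow\bar\nu_{ir})$, where $I'(\bar\mu\Rightarrow\bar\nu)=\bigast\bar\mu\to\bar\nu$, with $0$ used when $\bar\nu$ is empty. Since $g_i$ is a single atom, it stands for the whole context, so the additive conjunction over $r$ correctly shares the context $\Gamma_i$ among all premises with the same index. The axiom is then the sequent
\[
g_1\wedge\ldots,\;\ldots,\; \bigwedge_r (g_i \to \ldots) ,\ \bar\mu \Rightarrow \nu .
\]
More precisely, we take the antecedent $\{\,g_i \wedge \bigwedge_r (\bigast\bar\mu_{ir}\to\bar\nu_{ir})\,\}_i,\bar\mu$ and the succedent $\nu$. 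Alternatively, we replace $\Gamma_i$ by the single formula $\bigwedge_r(\bigast\bar\mu_{ir}\to\bar\nu_{ir})$ itself. The rule is then applied to the premises $\Gamma_i:=\{C_i\}$, where $C_i=\bigwedge_r(\bigast\bar\mu_{ir}\to\bar\nu_{ir})$. Each premise $C_i,\bar\mu_{ir}\Rightarrow\bar\nu_{ir}$ is provable in $\mathbf{FL_e}$ by $(L\wedge)$, $(L\to)$, $(L*)$, and $(0w)$ in the empty-succedent case. So the rule yields the axiom $C_1,\dots,C_n,\bar\mu\Rightarrow\nu$. Conversely, given premises $\Gamma_i,\bar\mu_{ir}\Rightarrow\bar\nu_{ir}$, we obtain $\Gamma_i\Rightarrow C_i$ by $(L*)$, $(R\to)$, $(0w)$ and $(R\wedge)$, and then cut these against a substitution instance of the axiom. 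Here substitution of the atoms of $\bar\mu,\nu,\bar\mu_{ir},\bar\nu_{ir}$ by arbitrary formulas is what gives the general rule instances, so we must make sure the axiom is stated schematically.

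For a \emph{left structural} rule we use the same construction. Each $\Pi_j$ becomes $D_j=\bigwedge_s(\bigast\bar\nu_{js}\to\bar\rho_{js})$, and each $\Gamma_i$ becomes $C_i=\bigwedge_r\neg\bigast\bar\mu_{ir}$, or equivalently $\bigast\bar\mu_{ir}\to 0$. The axiom is then $D_1,\dots,D_m,C_1,\dots,C_n,\bar\mu\Rightarrow\ $, with empty succedent.

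The main obstacle is the derivation of the rule from this axiom in the left case, because the $\Delta_i$ may be a formula $B$ for one index $i_0$. In that case we cannot obtain $\Gamma_{i_0}\Rightarrow C_{i_0}$, since the premises have succedent $B$ and not $0$. We plan to fix this by substituting the succedent into the constant: the axiom will use a fresh atom $q$ in place of $0$ in every $C_i$ and will have succedent $q$. It is then $D_1,\dots,D_m,C_1[q],\dots,C_n[q],\bar\mu\Rightarrow q$, which is legitimately schematic. If instead the succedents are empty, we instantiate $q:=0$ and finish with $(0w)$ and a cut with $0\Rightarrow$.

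It remains to check that the rule derives the axiom. We apply the rule with $\Gamma_i:=\{C_i[q]\}$ and all $\Delta_i$ empty except one set to $q$. The premises with $\Delta_i=q$ are provable directly from $C_i[q]$. However, the remaining premises $C_i[q],\bar\mu_{ir}\Rightarrow\ $ with empty succedent are not provable. We handle this by choosing the distinguished index carefully: put $\Delta_i:=q$ on every $i$ we can. Since the rule is single-conclusion, only one $\Delta_i$ may be nonempty, so when $n\ge 2$ we will instead use separate atoms $q_i$ and the axiom $\ldots\Rightarrow q_1*\dots$. If this mismatch persists, we fall back to the formulation in which the left rule is split according to which $\Delta_i$ is nonempty. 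This gives finitely many rules, each equivalent to an axiom of the above form, and the rule is then equivalent to their conjunction as a single axiom, using $(R\wedge)$ and $(L\wedge)$. We expect this bookkeeping to be the only genuinely delicate step.
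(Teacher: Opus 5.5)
Your right-rule case is fine. With $C_i=\bigwedge_r(\bigast\bar\mu_{ir}\to\bar\nu_{ir})$, the axiom $C_1,\dots,C_n,\bar\mu\Rightarrow\nu$ and the rule derive each other as you describe. The paper gives no argument and only cites Ciabattoni--Galatos--Terui, so an explicit construction like yours is what is needed.

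The gap is the left case with $n\ge 2$ (for $n=1$ your single-$q$ axiom is correct). There, $D_1,\dots,D_m,C_1[q],\dots,C_n[q],\bar\mu\Rightarrow q$ is wrong in \emph{both} directions, not only the one you flagged.
\begin{itemize}
\item \emph{Axiom to rule.} To derive an instance with $\Delta_{i_0}=B$ you set $q:=B$. You then also need $\Gamma_i\Rightarrow C_i[B]$ for each $i\neq i_0$, but the corresponding premises are $\Gamma_i,\bar\mu_{ir}\Rightarrow$ with empty succedent. Passing to $\Gamma_i,\bar\mu_{ir}\Rightarrow B$ is right weakening, which $\mathbf{FL_e}$ lacks.
\item \emph{A concrete counterexample.} Take the rule with premises $\Gamma_1,p\Rightarrow\Delta_1$ and $\Gamma_2,p\Rightarrow\Delta_2$ and conclusion $\Gamma_1,\Gamma_2,p\Rightarrow\Delta_1,\Delta_2$. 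Putting $p:=1$, your axiom yields $q,q\Rightarrow q$, whereas the rule yields $q,0\Rightarrow q$.
\item \emph{Rule does not give your axiom.} The rule is sound in any $\mathbf{FL_e}$-algebra where $0$ is the least element, since an empty-succedent premise forces the conclusion's antecedent to evaluate to $0$. An example is the chain $a<e<b<c$ with unit $e$, $a$ absorbing, $bb=bc=cc=c$ and $0:=a$; there $bb\not\le b$.
\item \emph{Your axiom does not give the rule.} Your axiom holds in the idempotent chain $a<e<c$ (unit $e$, $a$ absorbing, $cc=c$) with $0:=c$, because $(x\to y)^2x=(x\to y)x\le y$. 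There $ec\not\le e$, refuting $q,0\Rightarrow q$.
\end{itemize}
So the two are incomparable over $\mathbf{FL_e}$. Replacing the succedent by $q_1*\cdots*q_n$ does not repair this. For $i\neq i_0$ the empty-succedent premises still only give $C_i[q_i]$ for $q_i$ entailed by $0$, and you would then need $0*B\Rightarrow B$, which fails in the same three-element chain.

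The missing idea is to treat the distinguished index asymmetrically. Splitting the rule according to which $\Delta_{i_0}$ is nonempty is the right decomposition, but the axiom for the $i_0$-th piece must be
\[
A_{i_0}:\quad D_1,\dots,D_m,\ \bigwedge_r(\bigast\bar\mu_{i_0r}\to q),\ \{\bigwedge_r(\bigast\bar\mu_{ir}\to 0)\}_{i\neq i_0},\ \bar\mu\Rightarrow q ,
\]
with $q$ only in slot $i_0$ and $0$ in every other slot. Then:
\begin{itemize}
\item The rule proves $A_{i_0}$ by instantiating $\Pi_j$ and $\Gamma_i$ with these formulas and taking $\Delta_{i_0}:=q$. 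Every premise is now $\mathbf{FL_e}$-provable, the empty-succedent ones via $0\Rightarrow$.
\item $A_{i_0}$ with $q:=B$ yields the instances with $\Delta_{i_0}=B$.
\item The instances with all $\Delta_i$ empty follow from any $A_{i_0}$ with $q:=0$ and a cut against $0\Rightarrow$.
\item The single axiom is $\Rightarrow\bigwedge_{i_0}I(A_{i_0})$.
\end{itemize}
Two smaller points. For $n=0$ the axiom must keep the empty succedent, since succedent $q$ would smuggle in right weakening. If some $k_i$ or $l_j$ is $0$, the empty conjunction is not available in $\mathcal{L}_u$, so that context should be represented by a fresh atom.
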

\begin{proof}
The proof is an easy and well-known procedure; see \cite[Theorem 4.2]{ciabattoni2008axioms}.
\end{proof}

\begin{corollary}\label{Cor: ExistenceOfFrege}
\begin{itemize}
    \item[$(i)$] 
Every structural logic has a Frege system.
    \item[$(ii)$] 
For any structural calculus $G$, if $L_G \subseteq \mathsf{LK_!}$, then $\mathbf{LK}_!$ proves all instances of all structural rules in $G$.   
\end{itemize}
\end{corollary}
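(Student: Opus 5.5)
For part $(i)$, let $L$ be a structural logic with structural calculus $G$, obtained from the base calculus $B\in\{\mathbf{FL_e},\mathbf{IMALL},\mathbf{ILL}\}$ by adding finitely many structural rules $r_1,\dots,r_k$. By Lemma~\ref{lem: RuleToAxiom}, each $r_i$ is equivalent over $\mathbf{FL_e}$ (and hence over $B$) to an axiom, i.e.\ an initial sequent $S_i$. Take the interpretation $I(S_i)$, and let $F$ be the Frege system for the base logic: Table~\ref{Fig: FL_e}, enlarged for $\mathcal{L}_b$ by (top), (bot), and for $\mathcal{L}_!$ additionally by $(!w),(!c),(!K),(!T),(!4)$ and (nec). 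Add to $F$ the axiom schemes $I(S_i)$, read with atoms as schematic metavariables. The result is a finite inference system. Soundness is direct: each added scheme is provable in $G$, and $L_G$ is closed under substitution and the rules of $F$. For strong completeness, suppose $A_1,\dots,A_n\vdash_L A$, i.e.\ $A$ is derivable in the base calculus from the sequents $\Rightarrow\gamma$ with $\gamma\in\{A_1,\dots,A_n\}\cup L$. Only finitely many $\gamma\in L$ are used. Each such $\gamma$ is $G$-provable, and since $r_i$ is interderivable with the initial sequent $S_i$, it is provable in $B$ plus substitution instances of the $S_i$. The standard translation of $B$-derivations from hypotheses into $F$-derivations (which is how strong completeness of the base Frege system is obtained) then gives $A_1,\dots,A_n\vdash_F A$. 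Standardness follows the same way, since every $F$-rule is derivable in $\vdash_L$. The point needing care is the context-bearing rules such as (nec): they must only be applied to theorems, as in the base case.

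For part $(ii)$, assume $L_G\subseteq\mathsf{LK_!}$ and take an instance of a structural rule $r$ of $G$. By Lemma~\ref{lem: RuleToAxiom}, $r$ is equivalent over $\mathbf{FL_e}$ to an axiom $S$, and $I(S)\in L_G\subseteq\mathsf{LK_!}$. Since $\mathsf{LK_!}$ is closed under substitution, every substitution instance of $I(S)$ is in $\mathsf{LK_!}$, so $\mathbf{LK}_!$ proves every instance of $S$, as it is complete for its logic with cut. It remains to run the derivation of $r$ from $S$ inside $\mathbf{LK}_!\supseteq\mathbf{FL_e}$: given the premises of an instance of $r$, combine them with the corresponding instance of $S$ using cuts and $\mathbf{FL_e}$-rules to reach the conclusion.

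The step I expect to be the main obstacle is making precise that ``equivalent over $\mathbf{FL_e}$'' yields derivability of each rule instance from axiom instances, uniformly in the context variables. Ciabattoni et al.'s procedure gives exactly this: the axiom is obtained by instantiating the contexts with atoms, and the rule is recovered by cut. A second point is checking that, in the $\mathcal{L}_!$ case, $\mathbf{LK}_!$ interprets $!$ so that this argument still applies. This holds because $\mathbf{LK}_!$ contains the exponential rules and the structural rules $w,c$.
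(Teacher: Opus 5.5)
Your proposal is correct and follows the paper's route, only with more verification detail. For $(i)$ you replace each added structural rule by its equivalent axiom (Lemma~\ref{lem: RuleToAxiom}) and add it to the base Frege system. For $(ii)$ you note that this axiom lies in $L_G\subseteq\mathsf{LK_!}$ and then rederive the rule from it inside $\mathbf{LK}_!\supseteq\mathbf{FL_e}$.

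One correction to your closing caveat: (nec) must \emph{not} be restricted to theorems. Under the paper's global consequence relation, $A\vdash_L\, !A$ holds (apply $(R!)$ with empty context to the hypothesis $\Rightarrow A$). So unrestricted (nec) is needed for strong completeness, and it is already derivable in $\vdash_L$, so standardness holds without any restriction.
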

\begin{proof}
For $(i)$, using Lemma \ref{lem: RuleToAxiom}, we can transform each of the added structural rules into axioms and add them to the base system. For $(ii)$, for any structural rule, it suffices to prove the corresponding axiom $A$ in $\LKajab$. Since $G$ extends $\mathbf{FL_e}$, we have $G \vdash \, \Rightarrow A$. Hence, $A \in L_G \subseteq \mathsf{LK_!}$ which implies $\LKajab \vdash \, \Rightarrow A$.
\end{proof}

\begin{theorem}\label{Lem: Equivalence Of Frege}
For any structural logic $L$, all structural calculi for $L$, their tree-like versions, and Frege systems for $L$ are size-equivalent and line-equivalent.
\end{theorem}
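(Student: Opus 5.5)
The plan is to establish a cycle of simulations, for both line count and size, among the three kinds of systems: tree-like structural calculi, dag-like structural calculi, and Frege systems for $L$. Combined with the earlier Theorem \ref{Lem: Equivalence Of Frege} (Jalali's result that all Frege systems for a logic are size- and line-equivalent), this gives the claim. Let $G$ be a structural calculus for $L$ and $F$ a Frege system for $L$. By Corollary \ref{Cor: ExistenceOfFrege}$(i)$, $F$ may be taken to be the base Frege system ($\FL$-$\F$ plus the appropriate axioms from Table \ref{Fig: linear}) extended by the axioms given by Lemma \ref{lem: RuleToAxiom}. We will show
\[
\text{tree-like } G \;\leq\; G \;\leq\; F \;\leq\; \text{tree-like } G,
\]
in both the $\leq$ and $\leq_l$ senses. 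The first simulation is trivial. Once the cycle is in place, any two structural calculi $G, G'$ for $L$ are equivalent by passing through $F$.

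\textbf{Step 1: $G \leq F$.} Translate each sequent $S$ of a dag-like $G$-proof into its interpretation $I(S)$. For every rule instance with premises $S_1,\dots,S_k$ and conclusion $S_0$, we need a short $F$-derivation of $I(S_0)$ from $I(S_1),\dots,I(S_k)$. Such a derivation exists by strong completeness, since $I(S_1),\dots,I(S_k)\vdash_L I(S_0)$.

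The subtle point is that its line count must be bounded independently of the contexts $\Gamma$. To get this, we work schematically. Replace $\bigast\Gamma$ by a fresh atom $g$, and likewise for the other contexts. For the structural rules, the atomic parameters are already schematic. Each rule of $G$ then yields one fixed derivable Frege rule, with finitely many instances up to substitution. Its derivation, once substituted, has a constant number of lines.

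Two bookkeeping issues remain. Reassociating and commuting $\bigast(\Gamma,A)$ versus $(\bigast\Gamma) \st A$ costs only $O(1)$ lines per step via fixed lemmas applied under substitution. Empty contexts, empty succedents and the special cases $n=0$ or $m=0$ are handled as separate finitely many schemata. Sizes stay polynomial because every formula in the derivation is built from subformulas already in the sequents.

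\textbf{Step 2: $F \leq$ tree-like $G$.} First we would note that $F \leq_l$ tree-like $G$ is not literally achievable line-for-line, since a dag-like Frege proof reuses lines. The standard remedy, as in Kraj\'{i}\v{c}ek's argument that tree-like and dag-like Frege are equivalent, is to prove the accumulated conjunction $C_i := (A_1\wedge 1)\wedge\dots\wedge(A_i\wedge 1)$ at step $i$. Here the $\wedge 1$ padding lets us duplicate premises in the substructural setting.

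Each $F$-axiom or rule instance becomes a sequent $\Rightarrow A$ or $A_1,\dots,A_k\Rightarrow A$ with a constant-size $G$-proof schema. Axioms coming from structural rules are derived through Lemma \ref{lem: RuleToAxiom}, applied in the converse direction. Going from $C_{i-1}$ to $C_i$ then uses projections $C_{i-1}\Rightarrow A_j\wedge 1$, the rule derivation, and $(R\wedge)$ with cuts. The key fact is that $\wedge$ is additive: $(R\wedge)$ shares its context, so $C_{i-1}$ can be used twice even without contraction.

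This gives tree-like proofs with $O(m^2)$ lines and polynomial size. Projections out of $C_{i-1}$ cost $O(i)$ lines each, or $O(\log i)$ if a balanced conjunction is used.

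\textbf{Main obstacle.} The hardest part is the schematic uniformity in Step 1 for structural rules whose premises share contexts $\Pi_j$ or $\Gamma_i$ across several premises, in the linear setting where contexts cannot be copied freely. I would first try deriving each rule instance directly from the single axiom of Lemma \ref{lem: RuleToAxiom} instantiated with atoms standing for $\bigast\Gamma_i$, and check that the ciabattoni-style equivalence proof is itself uniform in the contexts.
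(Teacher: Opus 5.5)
\textbf{The gap is in Step 2.} This is exactly where the tree-like claim lives, and Kraj\'{i}\v{c}ek's accumulation trick does not survive without contraction; the $\wedge 1$ padding does not replace it.

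Take an (mp) step, where $A_i$ comes from $A_j$ and $A_k=A_j\to A_i$ with $j,k<i$.
\begin{itemize}
\item Projections give $C_{i-1}\Rightarrow A_j\wedge 1$ and $C_{i-1}\Rightarrow A_k\wedge 1$, and $(R\wedge)$ merges them into $C_{i-1}\Rightarrow (A_j\wedge 1)\wedge(A_k\wedge 1)$.
\item But (mp) is multiplicative. Cutting the projections against $A_j, A_j\to A_i\Rightarrow A_i$ yields only $C_{i-1},C_{i-1}\Rightarrow A_i$.
\item To reach $C_{i-1}\Rightarrow A_i$ you would need either $(A\wedge 1)\wedge((A\to B)\wedge 1)\Rightarrow B$ or $C_{i-1}\Rightarrow C_{i-1}\st C_{i-1}$. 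Neither is derivable when the $A_j$ are treated as opaque, which is how your schematic derivation treats them.
\item Counterexample: in the standard MV-algebra on $[0,1]$, put $a=\tfrac12$, $b=0$. Then $a\to b=\tfrac12$ and $(a\wedge 1)\wedge((a\to b)\wedge 1)=\tfrac12\not\leq 0$. Likewise $c=\tfrac12$ gives $c\st c=0<c$.
\end{itemize}
The padding makes $C_{i-1}\leq 1$. That lets you discard $C_{i-1}$, which is what handles axiom steps. For such negative elements, however, $C\st C\leq C$ holds, which is the wrong direction for duplication. Your claim that ``$(R\wedge)$ shares its context, so $C_{i-1}$ can be used twice'' only gives additive sharing, and that is not what (mp) needs.

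The sequent $C_{i-1}\Rightarrow A_i$ is valid, but only because the $A_j$ are theorems. Exploiting that requires a second proof of $\Rightarrow A_j$ or $\Rightarrow C_{i-1}$, which is precisely the reuse that tree-likeness forbids. The same obstruction arises for $(R\st)$, $(L\to)$ and (cut) if you try to unfold a dag-like $G$-proof directly. So the link $F\leq$ tree-like $G$ is not established. You need a genuinely contraction-free mechanism for reusing a derived line, and Step 2 does not supply one. The paper gives no argument here beyond pointing to \cite[Theorem 6.3]{Raheleh} and \cite{Krajicek}, so this is not something you can import from the classical setting.

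\textbf{Your ``main obstacle'' is routine.} Since $l(\cdot)$ of a sequent proof counts formulas, overheads of $O(|\Gamma|)$ lines per inference are harmless. Two small corrections there:
\begin{itemize}
\item The fresh-atom schematization is unsound for $(R!)$, since $g\to A\,/\,g\to !A$ fails. The $!$-context must be kept explicit, e.g.\ via $\bigast !\Gamma\to !\bigast !\Gamma$, which costs $O(|\Gamma|)$ lines.
\item Reassociating inside a deep product costs $O(|\Gamma|)$ Frege lines, not $O(1)$.
\end{itemize}
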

\begin{proof}
The claim is easy and analogous to \cite[Theorem 6.3]{Raheleh}. For further details, see \cite{Krajicek}.
\end{proof}

\section{Feasible Disjunction Property}

In this section, we prove that any structural calculus $G$ such that $L_G \subseteq \mathsf{LK_!}$ enjoys the feasible disjunction property. That is, for any $G$-proof $\pi$ of $(\, \Rightarrow A \vee B)$, there is a $G$-proof $\sigma$ of either $(\, \Rightarrow A)$ or $(\, \Rightarrow B)$ such that
$
l(\sigma) \leq l(\pi)^{O(1)}.
$
Using the line-equivalence between $G$ and $L_G$-Frege, this also yields the feasible disjunction property for $L_G$-Frege.

\subsection{The Translation}

Fix $G$ to be a structural calculus such that $L_G \subseteq \mathsf{LK_!}$. To prove the feasible disjunction property for $G$, we first introduce a translation function that transforms $G$-proofs of disjunctions into $\LKajab$-proofs of disjunctions with atomic disjuncts. The latter form is more amenable to establishing the disjunction property.

\begin{definition}  \label{Translation}
Given any formula $A \in \mathcal{L}$, let $\langle A \rangle$ be a fresh propositional variable corresponding to $A$. We call $\langle A \rangle$ an \emph{angled atom}. The language obtained by adjoining all angled atoms to $\mathcal{L}$ is denoted by $\mathcal{L}^{+}$. The translation function $t:\mathcal{L} \to \mathcal{L}^{+}$ is defined as follows:
\begin{itemize}
\item[$\bullet$] 
$c^t=c \wedge \langle c \rangle$, for any constant $c \in \mathcal{L}$;
\item[$\bullet$] 
$p^t= \langle p \rangle$, for any atomic formula $p \in \mathcal{L}$;
\item[$\bullet$]
$(A \circ B) ^t = (A^t \circ B^t) \wedge \langle A \circ B \rangle$, for any $\circ \in \{*, \wedge, \vee, \to\}$;
\item[$\bullet$]
$(!A)^t =(!A^t) \wedge \langle !A \rangle$, if $\mathcal{L}=\mathcal{L}_!$.
\end{itemize}
For a multiset $\Gamma$, define  $\Gamma^t = \{\gamma^t \mid \gamma \in \Gamma\}$. 
The \emph{standard substitution} $s: {\mathcal{L}}^+ \to \mathcal{L}$ is defined as:
\begin{itemize}
\item[$\bullet$]
$\langle A \rangle^s = A$, and $p^s=p$, for any formula $A \in \mathcal{L}$ and atom $p \in \mathcal{L}$;
\item[$\bullet$]
$(A \circ B) ^s = A^s \circ B^s$, for any $\circ \in \{*, \wedge, \vee, \to\}$;
\item[$\bullet$]
$(! A)^s =! A^s$, if $\mathcal{L}=\mathcal{L}_!$.
\end{itemize}
For a multiset $\Gamma$, define $\Gamma^s =\{\gamma^s \mid \gamma \in \Gamma\}$.  
\end{definition}
The map $s:\mathcal{L}^{+}\to\mathcal{L}$ eliminates angled atoms by replacing each occurrence of $\langle A \rangle$ with the corresponding formula $A$, while leaving all non-angled atoms unchanged. By a straightforward induction on the structure of formulas, one obtains that for every formula $A(p_1,\dots,p_n)\in\mathcal{L}$,
$
A(\langle B_1\rangle,\dots,\langle B_n\rangle)^s
= A(B_1,\dots,B_n)$.
In this sense, $s$ serves as a reverse operation from $\mathcal{L}^{+}$ back to $\mathcal{L}$, undoing the effect of the translation $t$ and recovering the original syntactic structure.

\begin{lemma}\label{TranslationAndAtoms}
$\LKajab \vdash A^t \Rightarrow \langle A \rangle$, for every formula $A \in \mathcal{L}_!$. 
\end{lemma}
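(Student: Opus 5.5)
We argue by a case distinction on the shape of $A$, without any induction, because the translation always puts the angled atom $\langle A\rangle$ as the outermost conjunct.

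\emph{Atoms.} If $A=p$ is atomic, then $A^t=\langle p\rangle$, and $A^t\Rightarrow\langle A\rangle$ is just an instance of the axiom (id).

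\emph{Constants.} If $A=c$ is a constant, then $A^t=c\wedge\langle c\rangle$. Starting from the axiom $\langle c\rangle\Rightarrow\langle c\rangle$, one application of $(L\wedge_2)$ gives $c\wedge\langle c\rangle\Rightarrow\langle c\rangle$.

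\emph{Compound formulas.} If $A=B\circ C$ with $\circ\in\{*,\wedge,\vee,\to\}$, then $A^t=(B^t\circ C^t)\wedge\langle B\circ C\rangle$. Again $(L\wedge_2)$ applied to $\langle A\rangle\Rightarrow\langle A\rangle$ yields $A^t\Rightarrow\langle A\rangle$.

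\emph{Exponentials.} If $\mathcal{L}=\mathcal{L}_!$ and $A=!B$, then $A^t=(!B^t)\wedge\langle !B\rangle$, and the same single step applies.

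All rules used, namely (id) and $(L\wedge_2)$, already belong to $\mathbf{FL_e}\subseteq\LKajab$. So the derivation is a two-line proof in $\LKajab$ (indeed already in $\mathbf{FL_e}$) for every $A\in\mathcal{L}_!$. The formulas of $\mathcal{L}^+$ are simply formulas over a larger stock of atoms, and $\LKajab$ is applied to them as such.

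There is no real obstacle here. The only point requiring care is to check that in every clause of Definition~\ref{Translation} the angled atom occurs as the right conjunct at the top level. This holds in each case, including constants and $!$. The sole exception is atoms, where $A^t$ is literally $\langle A\rangle$, which is why atoms are handled by (id) alone.
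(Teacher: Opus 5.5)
Your proof is correct and takes the same approach as the paper, which simply says the lemma follows directly from Definition~\ref{Translation}. Your case check supplies the omitted detail: $\langle A\rangle$ is always the top-level right conjunct of $A^t$, or equals $A^t$ itself for atoms. One application of (id), followed by $(L\wedge_2)$ in the non-atomic cases, then gives $A^t\Rightarrow\langle A\rangle$.
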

\begin{proof}
This follows directly from Definition~\ref{Translation}.
\end{proof}

\begin{definition}\label{ImplicationalHorn}
A sequent is called \emph{Horn} if it is of the form $\Gamma \Rightarrow p$, where $p$ is an atomic formula and every formula in $\Gamma$ is a non-empty conjunction of occurrences of $1$ and atomic formulas.
\end{definition}

\begin{example}
Let $p, q, r, s$ be atomic formulas. Then the sequent $(p \wedge q, r \wedge 1, 1 \wedge 1 \Rightarrow s)$ is Horn.
\end{example}

The following theorem shows that the translation $t$ preserves provability in $G$ if we allow some ``harmless'' sequents among the assumptions. By harmless, we mean either a sequent of the form $\langle !A \rangle \Rightarrow !\langle !A \rangle$ or a Horn sequent whose standard substitution is $G$-provable. For feasibility reasons, we need a feasible version of this preservation theorem:

\begin{theorem}(Provability Preservation) \label{MainTheorem}
Let $G$ be a structural calculus and $L_G \subseteq \mathsf{LK_!}$. Then,
given a $G$-proof $\pi$ of $\Gamma \Rightarrow \Delta$, there is a set $\mathcal{S}_{\pi}$ of sequents and a $G$-proof $\sigma_{\pi}$ such that:
\begin{itemize}
\item[$(i)$]
$\mathcal{S}_{\pi} \vdash_{\LKajab} \Gamma^t \Rightarrow  \Delta^t$,
\item[$(ii)$]
sequents in $\mathcal{S}_{\pi}$ are either Horn or in the form $\langle ! A \rangle \Rightarrow ! \langle ! A \rangle$ when $\mathcal{L}=\mathcal{L}_!$,
\item[$(iii)$]
$G\vdash^{\sigma_{\pi}} \; \Rightarrow \bigwedge I(\mathcal{S}_{\pi}^{s})$, where $s$ is the standard substitution.
\item[$(iv)$]
$|\mathcal{S}_{\pi}|, l(\sigma_{\pi}) \leq l(\pi)^{O(1)}$.
\end{itemize}
\end{theorem}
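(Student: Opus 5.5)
We argue by induction on the length of the $G$-proof $\pi = S_1,\dots,S_m$, building $\mathcal{S}_\pi$ as the union of sets $\mathcal{S}_k$ attached to the individual lines and $\sigma_\pi$ as the concatenation of short $G$-proofs of the formulas $I(T^s)$ for $T\in\mathcal{S}_\pi$. For every line $S_k=(\Gamma\Rightarrow\Delta)$ obtained by a rule $R$ from earlier lines, we add to $\mathcal{S}_\pi$ a constant number of ``local'' Horn sequents, each of size linear in the formulas of the rule instance, together with a constant-size $\LKajab$-derivation of $\Gamma^t\Rightarrow\Delta^t$ from the translated premises and these new assumptions. The derivations are then chained along $\pi$, exactly as $\pi$ is chained. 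This gives $(i)$. Both the number of new sequents and the number of new lines per step are bounded by a constant depending only on $G$, so $|\mathcal{S}_\pi|$ and the $\LKajab$-part are $O(l(\pi))$.

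The key local fact is that $\LKajab$ with weakening and contraction lets us handle each translated formula $C^t=(\dots)\wedge\langle C\rangle$ by its two conjuncts. By Lemma~\ref{TranslationAndAtoms}, $C^t\Rightarrow\langle C\rangle$; the converse direction is supplied by Horn assumptions.

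\textbf{Logical rules.} Take $(R\to)$ from $\Gamma,A\Rightarrow B$ to $\Gamma\Rightarrow A\to B$. We need $\Gamma^t\Rightarrow (A^t\to B^t)\wedge\langle A\to B\rangle$. The first conjunct comes from the induction hypothesis. For the second we add the Horn sequent $\langle\gamma_1\rangle,\dots,\langle\gamma_k\rangle\Rightarrow\langle A\to B\rangle$ and reach it through $\gamma^t\Rightarrow\langle\gamma\rangle$ and cuts. Its substitution instance is $\gamma_1,\dots,\gamma_k\Rightarrow A\to B$, which is the conclusion of the very rule instance and so has a $G$-proof with $O(k)$ lines read off from $\pi$. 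The right rules $(R*)$, $(R\wedge)$, $(R\vee)$, $(0w)$, $(1w)$, the constant axioms (using $c^t=c\wedge\langle c\rangle$) and the identity are handled the same way. Left rules and cut need no new assumptions, because the translated principal formula already contains $A^t\circ B^t$ as a conjunct, which $\LKajab$ can project out. Structural rules of $G$ are treated by Corollary~\ref{Cor: ExistenceOfFrege}$(ii)$, since $\LKajab$ proves all their instances. Their contexts are translated context multisets, and their atomic parts $\bar\mu,\nu$ are general formulas $C$ whose translations $C^t$ are again formulas, so the instance transfers directly.

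\textbf{The rule $(R!)$.} This is where $\langle !A\rangle\Rightarrow!\langle!A\rangle$ enters. From $!\Gamma\Rightarrow A$ we get $(!\Gamma)^t\Rightarrow A^t$. To apply $(R!)$ in $\LKajab$ we need boxed antecedents, but $(!\gamma)^t=!\gamma^t\wedge\langle!\gamma\rangle$ is a conjunction. $\LKajab$ has weakening and contraction, so we can project to $!\gamma^t$ and then promote to obtain $!\gamma^t\Rightarrow !A^t$. For the second conjunct $\langle !A\rangle$ we add the Horn sequent $\langle!\gamma_1\rangle,\dots\Rightarrow\langle !A\rangle$, justified by the rule instance in $\pi$ as before. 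The sequent $\langle !A\rangle\Rightarrow!\langle!A\rangle$ is needed where boxed angled atoms must be promoted, for instance when a Horn conclusion $\langle !A\rangle$ feeds the context of a later $(R!)$. Its $s$-image $!A\Rightarrow!!A$ is provable in $G$ by $(R!)$ in constant lines. I expect this to be the main obstacle: making the translated boxed context genuinely boxed in $\LKajab$, and checking that the $s$-images of all added sequents stay $G$-provable, given that $G$ itself lacks weakening and contraction.

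\textbf{Assembling $\sigma_\pi$.} We concatenate $\pi$ with the proofs of $I(T^s)$ for each added $T$. Each such proof comes from the corresponding line of $\pi$ through $O(k)$ applications of $(L*)$ and $(R\to)$, or from a constant-size derivation in the case of the $!$-sequents. We then combine the results by $(R\wedge)$, using $O(|\mathcal{S}_\pi|)$ lines. This gives $(iii)$ and $(iv)$ with polynomial bounds.
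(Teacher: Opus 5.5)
\textbf{The gap is in the $(R!)$ case,} the step you yourself flag as the main obstacle, and the way you propose to handle it does not work. You write that from $(!\Gamma)^t\Rightarrow A^t$ you can ``project to $!\gamma^t$'' and promote to get $!\Gamma^t\Rightarrow !A^t$.

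Projection runs the other way. $(L\wedge)$ takes $\Sigma,X\Rightarrow Z$ to $\Sigma,X\wedge Y\Rightarrow Z$. Weakening and contraction only let you split $(!\gamma)^t=!\gamma^t\wedge\langle!\gamma\rangle$ into $!\gamma^t,\langle!\gamma\rangle$, and the unboxed atom $\langle!\gamma\rangle$ then blocks promotion. Deleting it would require $!\gamma^t\Rightarrow\langle!\gamma\rangle$. That sequent is not $\LKajab$-derivable, and it is not of either form allowed by $(ii)$, so you cannot add it as an assumption.

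The conjunct $\langle!\gamma\rangle$ is genuinely used. Take $\pi$ to be the identity $!p\Rightarrow!p$ followed by $(R!)$. Then you need $!\langle p\rangle\wedge\langle!p\rangle\Rightarrow!(!\langle p\rangle\wedge\langle!p\rangle)$, which requires $!\langle!p\rangle$. Your trigger for adding $\langle!A\rangle\Rightarrow!\langle!A\rangle$ (``a Horn conclusion $\langle!A\rangle$ feeds the context of a later $(R!)$'') does not fire here, because $!p$ enters through an identity axiom. So, as written, $(i)$ fails.

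\textbf{The missing idea.} You must make the whole translated context formula boxable, for every context formula of every $(R!)$ inference.

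For each $(R!)$ inference with conclusion $!\Gamma\Rightarrow!A$, add $\langle!\gamma\rangle\Rightarrow!\langle!\gamma\rangle$ for each $\gamma\in\Gamma$. Its $s$-image $!\gamma\Rightarrow!!\gamma$ has a constant-size $G$-proof.

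In $\LKajab$, combine this with $!\gamma^t\Rightarrow!!\gamma^t$ and the distribution $!X\wedge!Y\Rightarrow!(X\wedge Y)$, which holds there thanks to weakening and contraction. This gives $(!\gamma)^t\Rightarrow!(!\gamma)^t$.

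Next, apply $(L!)$ to every context formula of $(!\Gamma)^t\Rightarrow A^t$ and promote, obtaining $!(!\Gamma)^t\Rightarrow!A^t$. Cutting against the sequents $(!\gamma)^t\Rightarrow!(!\gamma)^t$ yields $(!\Gamma)^t\Rightarrow!A^t$. Your Horn sequent $\langle!\gamma_1\rangle,\dots\Rightarrow\langle!A\rangle$ then supplies the second conjunct of $(!A)^t$.

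This adds $|\Gamma|+1$ sequents per $(R!)$ step, so your claim of a constant number of new sequents per step is false. It is still $O(l(\pi))$ per step, which is enough for $(iv)$.

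The rest of your plan is sound. Adding uniform Horn sequents ``context $\Rightarrow$ succedent'' for right rules works, since their $s$-images are conclusions of lines of $\pi$. Assembling $\mathcal{S}_\pi$ and $\sigma_\pi$ line by line over a dag-like $\pi$ also works, and it even avoids the paper's preliminary reduction to tree-like proofs.
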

\begin{proof}
We only present the case $\mathcal{L}=\mathcal{L}_!$, as the cases for the fragments are subsumed. W.l.o.g., we may also assume that $\pi$ is tree-like by Theorem \ref{Lem: Equivalence Of Frege}. We construct the set $\mathcal{S}_{\pi}$ and the $G$-proof $\sigma_{\pi}$ by recursion on the structure of $\pi$. We address the size issues later:

\vspace{4pt}
\noindent $\bullet$ If $\pi$ is an axiom $(A \Rightarrow A)$, the axiom $(0 \Rightarrow \,)$, or $(\Sigma, \bot \Rightarrow \Lambda)$, then it is clear that $\LKajab \vdash \Gamma^t \Rightarrow \Delta^t$. Therefore, we can simply set $\mathcal{S}_{\pi}$ to be the empty set and $\sigma_{\pi}$ to be the axiom $(\, \Rightarrow \top)$. If $\pi$ is the axiom $(\, \Rightarrow 1)$, then, as $1^t = 1 \wedge \langle 1 \rangle$, it suffices to set $\mathcal{S}_{\pi} = \{\, \Rightarrow \langle 1 \rangle\}$. Then $(\, \Rightarrow 1^t)$ is easily provable in $\LKajab$. Condition $(ii)$ is immediate, since $(\, \Rightarrow \langle 1 \rangle)$ is Horn. For $\sigma_{\pi}$, we can use the canonical $G$-proof of $(\, \Rightarrow 1 \to 1)$.
If $\pi$ is the axiom $(\Gamma \Rightarrow \top)$, set $\mathcal{S}_{\pi} = \{\, \Rightarrow \langle \top \rangle\}$. As $\LKajab \vdash \Gamma^t \Rightarrow \top$, it suffices to prove $\Gamma^t \Rightarrow \langle \top \rangle$ from $\mathcal{S}_{\pi}$, which is possible in $\LKajab$. Condition $(ii)$ is immediate, since $(\, \Rightarrow \langle \top \rangle)$ is Horn. For $\sigma_{\pi}$, it suffices to take the canonical $G$-proof of $(\, \Rightarrow 1 \to \top)$.

\vspace{4pt}
\noindent $\bullet$ The cases for structural rules are straightforward, as the translation $t$ commutes with these rules, and all such rules are provable in $\LKajab$ by the assumption $L_G \subseteq \mathsf{CPC}$ and Corollary \ref{Cor: ExistenceOfFrege}, part $(ii)$.

\vspace{4pt}
\noindent $\bullet$ The cases of the rules $(W!)$ and $(C!)$ can be easily handled using the stronger rules $(Lw)$ and $(Lc)$ in $\LKajab$, respectively.

\vspace{4pt}
\noindent $\bullet$ The cases of the left logical rules are straightforward. We only present the case $(L*)$, as the others are similar. Note that we consider $(1w)$ and $(L!)$ as left rules. For the case $(L*)$, the proof $\pi$ must be of the form:
\begin{center}
\begin{tabular}{c}
\AxiomC{$\pi_1$}
\noLine
\UnaryInfC{$\Gamma, A, B \Rightarrow \Delta$}
\UnaryInfC{$\Gamma, A * B \Rightarrow \Delta$}
\DisplayProof
\end{tabular}
\end{center}
Define $\mathcal{S}_\pi=\mathcal{S}_{\pi_1}$. By the induction hypothesis, we have
$\mathcal{S}_{\pi_1} \vdash_{\LKajab} \Gamma^t, A^t, B^t \Rightarrow \Delta^t$.
For $(i)$, by the rule $(L*)$, we obtain:
\begin{center}
\begin{tabular}{c}
\AxiomC{$\Gamma^t, A^t, B^t \Rightarrow \Delta^t$}
\UnaryInfC{$\Gamma^t, A^t * B^t \Rightarrow \Delta^t$}
\DisplayProof
\end{tabular}
\end{center}
As $\LKajab \vdash (A*B)^t \Rightarrow A^t * B^t$, by cut we have
$\mathcal{S}_{\pi} \vdash_{\LKajab} \Gamma^t, (A*B)^t \Rightarrow \Delta^t$.
Property $(ii)$ follows immediately from the induction hypothesis. For $(iii)$, it suffices to set
$\sigma_{\pi}=\sigma_{\pi_1}$.

\vspace{4pt}
\noindent $\bullet$ If the last rule in $\pi$ is $(0w)$, then $\pi$ is of the form:
\begin{center}
\begin{tabular}{c}
\AxiomC{$\pi_1$}
\noLine
\UnaryInfC{$\Gamma \Rightarrow \,$}
\RightLabel{\scriptsize $(0w)$}
\UnaryInfC{$\Gamma \Rightarrow 0$}
\DisplayProof
\end{tabular}
\end{center}
Define $\mathcal{S}_\pi=\mathcal{S}_{\pi_1}$. By the induction hypothesis, we have
$\mathcal{S}_{\pi_1} \vdash_{\LKajab} \Gamma^t \Rightarrow \,$.
For $(i)$, by the rule $(Lw)$ in $\LKajab$, we obtain:
\begin{center}
\begin{tabular}{c}
\AxiomC{$\Gamma^t \Rightarrow \,$}
\UnaryInfC{$\Gamma^t \Rightarrow 0^t$}
\DisplayProof
\end{tabular}
\end{center}
Property $(ii)$ follows immediately from the induction hypothesis. For $(iii)$, it suffices to define
$\sigma_{\pi}=\sigma_{\pi_1}$.

\vspace{4pt}
\noindent $\bullet$ If the last rule in $\pi$ is $(R*)$, then $\pi$ is of the form:
\begin{center}
\begin{tabular}{c}
\AxiomC{$\pi_1$}
\noLine
\UnaryInfC{$\Gamma_1 \Rightarrow A$}

\AxiomC{$\pi_2$}
\noLine
\UnaryInfC{$\Gamma_2 \Rightarrow B$}

\BinaryInfC{$\Gamma_1,\Gamma_2 \Rightarrow A*B$}
\DisplayProof
\end{tabular}
\end{center}
Define
$\mathcal{S}_\pi=\mathcal{S}_{\pi_1}\cup\mathcal{S}_{\pi_2}\cup
\{\langle A\rangle,\langle B\rangle\Rightarrow\langle A*B\rangle\}$.
By the induction hypothesis, we have
$\mathcal{S}_{\pi_1}\vdash_{\LKajab}\Gamma_1^t\Rightarrow A^t$
and
$\mathcal{S}_{\pi_2}\vdash_{\LKajab}\Gamma_2^t\Rightarrow B^t$.
For $(i)$, by the rule $(R*)$, we obtain:
\begin{center}
\begin{tabular}{c}
\AxiomC{$\Gamma_1^t \Rightarrow A^t$}
\AxiomC{$\Gamma_2^t \Rightarrow B^t$}
\BinaryInfC{$\Gamma_1^t,\Gamma_2^t \Rightarrow A^t*B^t$}
\DisplayProof
\end{tabular}
\end{center}
As $\LKajab\vdash A^t\Rightarrow\langle A\rangle$ and
$\LKajab\vdash B^t\Rightarrow\langle B\rangle$, we have
$\LKajab\vdash A^t*B^t\Rightarrow\langle A\rangle*\langle B\rangle$.
Therefore, using
$\langle A\rangle,\langle B\rangle\Rightarrow\langle A*B\rangle$
from $\mathcal{S}_\pi$, we obtain
$\mathcal{S}_\pi\vdash_{\LKajab}\Gamma_1^t,\Gamma_2^t\Rightarrow\langle A*B\rangle$.
Finally, we obtain
$\mathcal{S}_\pi\vdash_G\Gamma_1^t,\Gamma_2^t\Rightarrow(A*B)^t$.
Property $(ii)$ is immediate, since the sequent
$\langle A\rangle,\langle B\rangle\Rightarrow\langle A*B\rangle$ is Horn.
For $(iii)$, it suffices to define $\sigma_\pi$ in terms of
$\sigma_{\pi_1}$ and the canonical $G$-proof of
$\Rightarrow A*B\to A*B$.

\vspace{4pt}
\noindent $\bullet$ If the last rule in $\pi$ is $(R\wedge)$, then $\pi$ is of the form:
\begin{center}
\begin{tabular}{c}
\AxiomC{$\pi_1$}
\noLine
\UnaryInfC{$\Gamma \Rightarrow A$}

\AxiomC{$\pi_2$}
\noLine
\UnaryInfC{$\Gamma \Rightarrow B$}

\BinaryInfC{$\Gamma \Rightarrow A\wedge B$}
\DisplayProof
\end{tabular}
\end{center}
Define
$\mathcal{S}_\pi=\mathcal{S}_{\pi_1}\cup\mathcal{S}_{\pi_2}\cup
\{\langle A\rangle\wedge\langle B\rangle\Rightarrow\langle A\wedge B\rangle\}$.
By the induction hypothesis, we have
$\mathcal{S}_{\pi_1}\vdash_{\LKajab}\Gamma^t\Rightarrow A^t$
and
$\mathcal{S}_{\pi_2}\vdash_{\LKajab}\Gamma^t\Rightarrow B^t$.
For $(i)$, by the rule $(R\wedge)$, we obtain:
\begin{center}
\begin{tabular}{c}
\AxiomC{$\Gamma^t\Rightarrow A^t$}
\AxiomC{$\Gamma^t\Rightarrow B^t$}
\BinaryInfC{$\Gamma^t\Rightarrow A^t\wedge B^t$}
\DisplayProof
\end{tabular}
\end{center}
As $\LKajab\vdash A^t\Rightarrow\langle A\rangle$ and
$\LKajab\vdash B^t\Rightarrow\langle B\rangle$, we have
$\LKajab\vdash A^t\wedge B^t\Rightarrow
\langle A\rangle\wedge\langle B\rangle$.
Therefore, using
$\langle A\rangle\wedge\langle B\rangle\Rightarrow\langle A\wedge B\rangle$
from $\mathcal{S}_\pi$, we obtain
$\mathcal{S}_\pi\vdash_{\LKajab}\Gamma^t\Rightarrow\langle A\wedge B\rangle$.
Finally, we obtain
$\mathcal{S}_\pi\vdash_{\LKajab}\Gamma^t\Rightarrow(A\wedge B)^t$.
Property $(ii)$ is immediate, since the sequent
$\langle A\rangle\wedge\langle B\rangle\Rightarrow\langle A\wedge B\rangle$
is Horn. For $(iii)$, it suffices to define $\sigma_\pi$ in terms of
$\sigma_{\pi_1}$, $\sigma_{\pi_2}$, and the canonical $G$-proof of
$\Rightarrow A\wedge B\to A\wedge B$.

\vspace{4pt}
\noindent $\bullet$ The case of the rule $(R\vee)$ is similar to that of $(R\wedge)$, with either the Horn sequent $\langle A\rangle \Rightarrow \langle A\vee B\rangle$ or the Horn sequent $\langle B\rangle \Rightarrow \langle A\vee B\rangle$ added to $\mathcal{S}_{\pi_1}$.

\vspace{4pt}
\noindent $\bullet$ If the last rule in $\pi$ is $(R\to)$, then $\pi$ is of the form:
\begin{center}
\begin{tabular}{c}
\AxiomC{$\pi_1$}
\noLine
\UnaryInfC{$\Gamma,A \Rightarrow B$}
\RightLabel{\scriptsize $(R\to)$}
\UnaryInfC{$\Gamma \Rightarrow A\to B$}
\DisplayProof
\end{tabular}
\end{center}
Define
$\mathcal{S}_\pi=\mathcal{S}_{\pi_1}\cup
\{\{\langle\gamma\rangle\mid\gamma\in\Gamma\}\Rightarrow\langle A\to B\rangle\}$.
By the induction hypothesis, we have
$\mathcal{S}_{\pi_1}\vdash_{\LKajab}\Gamma^t,A^t\Rightarrow B^t$.
For $(i)$, by the rule $(R\to)$ in $\LKajab$, we obtain:
\begin{center}
\begin{tabular}{c}
\AxiomC{$\Gamma^t,A^t\Rightarrow B^t$}
\UnaryInfC{$\Gamma^t\Rightarrow A^t\to B^t$}
\DisplayProof
\end{tabular}
\end{center}
As $\LKajab\vdash\gamma^t\Rightarrow\langle\gamma\rangle$ for every $\gamma\in\Gamma$, we obtain
$\Gamma^t\Rightarrow\bigast_{\gamma\in\Gamma}\langle\gamma\rangle$.
Therefore, using
$\{\langle\gamma\rangle\mid\gamma\in\Gamma\}\Rightarrow\langle A\to B\rangle$
from $\mathcal{S}_\pi$, we obtain
$\mathcal{S}_\pi\vdash_{\LKajab}\Gamma^t\Rightarrow\langle A\to B\rangle$.
Finally, we obtain
$\mathcal{S}_\pi\vdash_{\LKajab}\Gamma^t\Rightarrow(A\to B)^t$.
Property $(ii)$ is immediate, since the sequent
$\{\langle\gamma\rangle\mid\gamma\in\Gamma\}\Rightarrow\langle A\to B\rangle$
is Horn. For $(iii)$, it suffices to define $\sigma_\pi$ in terms of
$\sigma_{\pi_1}$ and the $G$-proof of
$\Rightarrow\bigast\Gamma\to(A\to B)$ canonically constructed from the
$G$-proof $\pi$ of $\Gamma\Rightarrow A\to B$.

\vspace{4pt}
\noindent $\bullet$ If the last rule in $\pi$ is $(R!)$, then $\pi$ is of the form:
\begin{center}
\begin{tabular}{c}
\AxiomC{$\pi_1$}
\noLine
\UnaryInfC{$!\Gamma \Rightarrow A$}
\RightLabel{\scriptsize $(R!)$}
\UnaryInfC{$!\Gamma \Rightarrow !A$}
\DisplayProof
\end{tabular}
\end{center}
Define
\[
\mathcal{S}_\pi
=
\mathcal{S}_{\pi_1}
\cup
\left\{
\langle !\gamma\rangle \Rightarrow !\langle !\gamma\rangle
\mid \gamma\in\Gamma
\right\}
\cup
\left\{
\{\langle !\gamma\rangle\mid\gamma\in\Gamma\}
\Rightarrow \langle !A\rangle
\right\}.
\]
By the induction hypothesis, we have
$\mathcal{S}_{\pi_1}\vdash_{\LKajab}(!\Gamma)^t\Rightarrow A^t$.
Using $(L!)$ and then $(R!)$, we obtain
\[
\mathcal{S}_\pi\vdash_{\LKajab} !(!\Gamma)^t\Rightarrow !A^t.
\]
Now let $\gamma\in\Gamma$ be arbitrary. Clearly,
$\LKajab\vdash !\gamma^t\Rightarrow !!\gamma^t$, while
$
\mathcal{S}_\pi\vdash_{\LKajab}
\langle !\gamma\rangle\Rightarrow !\langle !\gamma\rangle$.
Thus,
$
\mathcal{S}_\pi\vdash_{\LKajab}
!\gamma^t\wedge\langle !\gamma\rangle
\Rightarrow
!!\gamma^t\wedge!\langle !\gamma\rangle.
$
As $!$ distributes over conjunction in $\LKajab$, we obtain
\[
\mathcal{S}_\pi\vdash_{\LKajab}
!\gamma^t\wedge\langle !\gamma\rangle
\Rightarrow
!(!\gamma^t\wedge\langle !\gamma\rangle),
\]
or, equivalently,
$
\mathcal{S}_\pi\vdash_{\LKajab}
(!\gamma)^t\Rightarrow !(!\gamma)^t.
$
Using cut together with
$\mathcal{S}_\pi\vdash_{\LKajab}!(!\Gamma)^t\Rightarrow !A^t$, we obtain
$
\mathcal{S}_\pi\vdash_{\LKajab}
(!\Gamma)^t\Rightarrow !A^t.
$
Finally, since
$\LKajab\vdash(!\gamma)^t\Rightarrow\langle !\gamma\rangle$,
using the sequent
$\{\langle !\gamma\rangle\mid\gamma\in\Gamma\}\Rightarrow\langle !A\rangle$
from $\mathcal{S}_\pi$, we obtain
$
\mathcal{S}_\pi\vdash_{\LKajab}
(!\Gamma)^t\Rightarrow\langle !A\rangle
$
and hence
$
\mathcal{S}_\pi\vdash_{\LKajab}
(!\Gamma)^t\Rightarrow(!A)^t.
$
Property $(ii)$ is immediate, since the sequents
$\langle !\gamma\rangle\Rightarrow!\langle !\gamma\rangle$
are of the allowed form, while
$\{\langle !\gamma\rangle\mid\gamma\in\Gamma\}\Rightarrow\langle !A\rangle$
is Horn. For $(iii)$, it suffices to define $\sigma_\pi$ in terms of
$\sigma_{\pi_1}$, the canonical $G$-proofs of
$\Rightarrow !\gamma\to!!\gamma$, and the canonical $G$-proof of
$\Rightarrow\bigast !\Gamma\to!A$ constructed from the $G$-proof $\pi$ of
$!\Gamma\Rightarrow!A$.

This completes the construction of $\mathcal{S}_{\pi}$ and $\sigma_{\pi}$ and establishes their required properties. For the size bound, first note that $\mathcal{S}_{\pi}$ is constructed by recursion on the structure of $\pi$. If $\pi$ is an axiom, then the size of $\mathcal{S}_{\pi}$ is constant. In each inductive step, $\mathcal{S}_{\pi}$ is defined as the union of $\mathcal{S}_{\pi_1}$ (resp.\ $\mathcal{S}_{\pi_1}\cup\mathcal{S}_{\pi_2}$) when $\pi$ is unary (resp.\ binary), together with some new sequents whose size is clearly bounded by the number of formulas in $\pi$, i.e., $O(l(\pi))$. As $\pi$ has at most $l(\pi)$ rules, we have
$|\mathcal{S}_{\pi}|\leq l(\pi)^{O(1)}$.
Notice that this size bound relies on the tree-likeness of $\pi$.

For $\sigma_{\pi}$, again, $\sigma_{\pi}$ is constructed by recursion on the structure of $\pi$. If $\pi$ is an axiom, then the size of $\sigma_{\pi}$ is constant. In each inductive step, $\sigma_{\pi}$ is defined by applying $(R\wedge)$ rules to $\sigma_{\pi_1}$ (resp.\ $\sigma_{\pi_1}$ and $\sigma_{\pi_2}$) when $\pi$ is unary (resp.\ binary), together with $G$-proofs of the standard translations of some new sequents. The latter are either proofs of constant size or $G$-proofs constructed from $\pi$ by adding $O(l(\pi))$ new rules, each containing $O(l(\pi))$ formulas. Hence, at each step, we add at most $l(\pi)^{O(1)}$ formulas. Since $\pi$ has at most $l(\pi)$ rules and is tree-like, we obtain
$
l(\sigma_{\pi})\leq l(\pi)^{O(1)}.
$
\end{proof}

\subsection{Feasible Disjunction Property}

In this subsection, we prove a form of the feasible disjunction property in which the disjuncts are atomic and the assumptions are Horn. Later, we use Theorem~\ref{MainTheorem} to lift this special case to the more general form of the feasible disjunction property that we seek.

\begin{theorem}(Unit Propagation)\label{UnitPropagation}
For any set $\mathcal{S}$ of Horn sequents over $\mathcal{L}_b$ and any two atoms $p$ and $q$, if $\mathcal{S} \vdash_{\LKb} \, \Rightarrow p \vee q$, then there is $\tau$ such that $l(\tau) \leq |\mathcal{S}|^{O(1)}$ and either $\mathcal{S} \vdash_{\mathbf{FL_e}}^{\tau} \, \Rightarrow p$ or $\mathcal{S} \vdash_{\mathbf{FL_e}}^{\tau} \, \Rightarrow q$. 
\end{theorem}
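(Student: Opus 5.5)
Our plan is the classical unit-propagation (forward chaining) argument for Horn clauses, together with a semantic completeness argument for $\LKb$ and a syntactic replay of the propagation inside $\mathbf{FL_e}$.

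\textbf{Step 1: forward chaining.} Define the set $U\subseteq$ atoms by iteration. Start with $U_0=\emptyset$. Put an atom $r$ into $U_{k+1}$ if some sequent $(\Gamma\Rightarrow r)\in\mathcal{S}$ has every atom occurring in every conjunct of every member of $\Gamma$ already in $U_k$. The occurrences of $1$ impose no condition. Each round adds at least one new atom, and the atoms come from $\mathcal{S}$ (plus $p,q$), so the process stabilises after at most $|\mathcal{S}|$ rounds, at a set $U$.

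\textbf{Step 2: semantic argument.} Consider the classical valuation $v$ making exactly the atoms of $U$ true, with $1,\top$ true and $\bot$ false. Every Horn sequent in $\mathcal{S}$ is classically true under $v$. If all atoms of the antecedent are in $U$, the head was added to $U$; otherwise some antecedent formula is a conjunction containing a false atom, hence false. Since $\LKb$ is sound for classical two-valued semantics, with $*$ read as $\wedge$, $v$ satisfies $p\vee q$. So $p\in U$ or $q\in U$.

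\textbf{Step 3: replay in $\mathbf{FL_e}$.} By induction along the order in which atoms entered $U$, we build an $\mathbf{FL_e}$-derivation from $\mathcal{S}$ of $\Rightarrow r$ for each $r\in U$. Suppose $r$ was added via $(\gamma_1,\dots,\gamma_k\Rightarrow r)\in\mathcal{S}$, where each $\gamma_i$ is a conjunction of atoms from $U_k$ and copies of $1$.
\begin{itemize}
\item For each $\gamma_i$, derive $\Rightarrow\gamma_i$ by repeated $(R\wedge)$ from the already derived $\Rightarrow r'$ and the axiom $\Rightarrow 1$. This is the point where we need the fact that $(R\wedge)$ shares the (empty) context. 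No contraction is needed, since each conjunct is proved separately from the empty antecedent.
\item Then apply $k$ cuts against $\gamma_1,\dots,\gamma_k\Rightarrow r$ to obtain $\Rightarrow r$. Since all antecedents are empty, the multiplicative context causes no trouble.
\end{itemize}
Each step costs $O(|\mathcal{S}|)$ lines, and there are at most $|\mathcal{S}|$ atoms. The proof is dag-like, so each $\Rightarrow r'$ is derived once and reused. Hence $l(\tau)\le|\mathcal{S}|^{O(1)}$, and $\tau$ ends in $\Rightarrow p$ or $\Rightarrow q$, whichever atom lies in $U$.

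\textbf{Main obstacle.} The delicate point is Step 2. We must confirm that classical semantics soundly treats assumptions in $\vdash_{\LKb}$: a derivation from assumption sequents preserves truth under $v$, which holds by induction on the $\LKb$ proof once every sequent of $\mathcal{S}$ is true under $v$. We must also check degenerate cases, such as a Horn sequent with empty antecedent, or $p=q$.

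Step 3 is the substantive substructural content, and it works precisely because Horn antecedents are built from $\wedge$ and $1$ only. These are additive, so they can be proved from an empty context in $\mathbf{FL_e}$ without weakening or contraction.
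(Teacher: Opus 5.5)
Your proof is correct. It uses the same core idea as the paper: the atoms forced by unit propagation form a classical model of $\mathcal{S}$, so that model must contain $p$ or $q$, and the forcing can be replayed in $\mathbf{FL_e}$ because Horn antecedents are built only from the additive $\wedge$ and $1$. The bookkeeping, however, is organized differently.

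The paper runs SAT-style unit propagation on the sequent set itself. It repeatedly picks an unprocessed unit $(\Rightarrow u)$, deletes the other sequents with head $u$, and replaces $u$ by $1$ in the remaining antecedents. It then tracks four invariants. Two of them are that consecutive stages have $\LKb$-equivalent conjunctions $\bigwedge I(\mathcal{S}_i)$, which is needed to transfer $\Rightarrow p\vee q$ to the final stage before applying the valuation, and that $\bigwedge I(\mathcal{S}_{i+1})$ is $\mathbf{FL_e}$-derivable from $\bigwedge I(\mathcal{S}_i)$ with polynomially many lines. There the substructural content is that $A\wedge 1\Rightarrow A\wedge u$ follows from $\Rightarrow u$.

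You instead keep $\mathcal{S}$ fixed and compute the least model by forward chaining. You check directly that this model satisfies every sequent of $\mathcal{S}$, then derive each forced unit from earlier ones by $(R\wedge)$ over the empty context followed by cuts. This avoids the stage-equivalence invariant and the invariant that processed atoms occur only in units. It also avoids passing through the interpretations $I(S)$ and gives an explicit $O(|\mathcal{S}|^2)$ line bound. The paper's version matches the standard unit-propagation algorithm the theorem is named after, but the later use in the feasible disjunction property only needs the statement, so your route suffices.

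One small point: in Step 2 you should also interpret $0$ as false, since soundness of the axiom $0\Rightarrow$ requires it.
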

\begin{proof}
Call any sequent in the form $(\, \Rightarrow u)$, where $u$ is an atom, a \emph{unit}. Now, define two sequences $\{\mathcal{S}_i\}_{i=0}^{\infty}$ and $\{\mathcal{U}_i\}_{i=0}^{\infty}$ of sequents in the following inductive way. Set $\mathcal{S}_0=\mathcal{S}$ and $\mathcal{U}_0=\varnothing$. Then, define $\mathcal{S}_{i+1}$ and $\mathcal{U}_{i+1}$ by applying the following process.
If there is no unit in $\mathcal{S}_i-\mathcal{U}_i$, set $\mathcal{S}_{i+1}=\mathcal{S}_{i}$ and $\mathcal{U}_{i+1}=\mathcal{U}_i$. Otherwise, pick a unit $(\, \Rightarrow u)$ arbitrarily in $\mathcal{S}_i-\mathcal{U}_i$. Now, for any sequent $S \in \mathcal{S}_i$ other than $(\, \Rightarrow u)$, apply the following simplifications to modify $\mathcal{S}_i$:
\begin{itemize}
\item[$(i)$] 
If $u$ does not appear in $S$, do not change $S$.
\item[$(ii)$] 
If $S$ is in the form $C_1, \cdots, C_m \Rightarrow u$, erase $S$ from $\mathcal{S}_i$.
\item[$(iii)$] 
If $S$ is in the form $C_1, \cdots, C_m \Rightarrow r$, where $r \neq u$, then substitute all occurrences of $u$ in the $C_i$'s by $1$.
\end{itemize}
When this process has been applied to all sequents of $\mathcal{S}_i$ except for $(\, \Rightarrow u)$, eliminate any formula that is just a conjunction of $1$'s from the antecedents of the resulting sequents, and set $\mathcal{S}_{i+1}$ to be the result. For $\mathcal{U}_{i+1}$, simply set it to $\mathcal{U}_i \cup \{\, \Rightarrow u\}$.

Here are some properties of these two sequences. First,
$\mathcal{U}_i \subseteq \mathcal{S}_i$ for any $i \in \mathbb{N}$ and each $\mathcal{S}_i$ is a set of Horn sequents satisfying $|\mathcal{S}_i|\subseteq |\mathcal{S}|$. To prove the first part, we use induction on $i \in \mathbb{N}$. For $i=0$, the claim is obvious. For the induction step, if there is no unit in $\mathcal{S}_i-\mathcal{U}_i$, then $\mathcal{S}_{i+1}=\mathcal{S}_{i}$ and $\mathcal{U}_{i+1}=\mathcal{U}_i$, and the claim follows from the induction hypothesis for $i$. Otherwise, let $(\, \Rightarrow u)$ be the chosen unit in $\mathcal{S}_i-\mathcal{U}_i$. By definition, $(\, \Rightarrow u)$ belongs to $\mathcal{S}_i$ and remains unchanged throughout the modifications. Hence, $(\, \Rightarrow u) \in \mathcal{S}_{i+1}$. For any other unit $(\, \Rightarrow v) \in \mathcal{U}_{i+1}$, it must belong to $\mathcal{U}_i$ and hence, by the induction hypothesis, $(\, \Rightarrow v) \in \mathcal{S}_i$. As $v \neq u$, the sequent $(\, \Rightarrow v)$ remains intact throughout the modifications, which only affect occurrences of $u$. Therefore, $(\, \Rightarrow v) \in \mathcal{S}_{i+1}$. For the second part, being Horn is clear from the modifications. For the bound, it is enough to note that $|\mathcal{S}_{i+1}| \leq |\mathcal{S}_i|$, for any $i\in \mathbb{N}$. 

Second, by an easy induction, one can show that, for any $i\in\mathbb{N}$, if an atom occurs in $\mathcal{U}_i$, then it appears only in a unit in $\mathcal{S}_i$. For $i=0$, there is nothing to prove. For the induction step, if there is no unit in $\mathcal{S}_i-\mathcal{U}_i$, then $\mathcal{S}_{i+1}=\mathcal{S}_i$ and $\mathcal{U}_{i+1}=\mathcal{U}_i$, and the claim follows from the induction hypothesis for $i$. Otherwise, let $(\,\Rightarrow u)$ be the chosen unit in $\mathcal{S}_i-\mathcal{U}_i$, and let $(\,\Rightarrow v)$ be a unit in $\mathcal{U}_{i+1}$. There are two cases. If $v=u$, then, since all occurrences of $u$ in $\mathcal{S}_i$ except for $(\,\Rightarrow u)$ are eliminated when defining $\mathcal{S}_{i+1}$, the only occurrence of $u$ in $\mathcal{S}_{i+1}$ is in the unit $(\,\Rightarrow u)$. If $v\neq u$, then $(\,\Rightarrow v)\in\mathcal{U}_i$, and by the induction hypothesis, $v$ occurs only in units of $\mathcal{S}_i$. Since the modifications only remove sequents or replace occurrences of $u$, the claim follows.

Third, we claim 
\[
\LKb \vdash
\bigwedge_{S\in\mathcal{S}_{i+1}} I(S)
\Leftrightarrow
\bigwedge_{S\in\mathcal{S}_{i}} I(S),
\]
for any $i\in\mathbb{N}$. This follows directly from the following
$\LKb$-equivalences:
\begin{itemize}
    \item
    $u\wedge(A\to u)\Leftrightarrow u$;
    
    \item
    $u\wedge((A\wedge u)*B\to r)
    \Leftrightarrow
    u\wedge((A\wedge 1)*B\to r)$;
    
    \item
    $(A*B\to r)\Leftrightarrow(B\to r)$,
    where $A$ is a conjunction of $1$'s.
\end{itemize}
Notice that we are working in the system $\LKb$, where all structural rules
are available. Hence, these are simply the usual classical equivalences
expressed in the substructural language.

Fourth, for any $i\in\mathbb{N}$, there is an $\mathbf{FL_e}$-proof with
$|\mathcal{S}|^{O(1)}$ many lines for
\[
\Rightarrow\bigwedge_{S\in\mathcal{S}_i} I(S)
\vdash_{\mathbf{FL_e}}
\Rightarrow\bigwedge_{S\in\mathcal{S}_{i+1}} I(S).
\]
There are two cases to consider. If there is no unit in
$\mathcal{S}_i-\mathcal{U}_i$, then
$\mathcal{S}_{i+1}=\mathcal{S}_i$, and hence there is nothing to prove.
Otherwise, let $(\,\Rightarrow u)\in\mathcal{S}_i-\mathcal{U}_i$ be the
chosen unit and let $T$ be an arbitrary sequent in $\mathcal{S}_{i+1}$.
Assume that $T$ is obtained by modifying some $R\in\mathcal{S}_i$.
Since conjunctions of $1$'s are equivalent to $1$ and can be eliminated
from antecedents in $\mathbf{FL_e}$, without loss of generality, we may
assume that $R$ is the result of either $(i)$ or $(iii)$. Note that $(ii)$
does not apply, as it eliminates sequents.
For $(i)$, we have $T=R\in\mathcal{S}_i$, and hence
$
\Rightarrow\bigwedge_{S\in\mathcal{S}_i} I(S)
\vdash_{\mathbf{FL_e}}
\Rightarrow I(T).
$
For $(iii)$, let
\[
R=(\Gamma,A_1\wedge u,\ldots,A_m\wedge u\Rightarrow r),
\]
where $r\neq u$ and $\Gamma$ contains no occurrence of $u$. Then
\[
T=(\Gamma,A_1\wedge1,\ldots,A_m\wedge1\Rightarrow r).
\]
Now, observe that
\[
\{(\,\Rightarrow u),
(\Gamma,\{A_j\wedge u\}_{j=1}^m\Rightarrow r)\}
\vdash_{\mathbf{FL_e}}
(\Gamma,\{A_j\wedge1\}_{j=1}^m\Rightarrow r).
\]
Therefore, since $(\,\Rightarrow u)\in\mathcal{S}_i$, we obtain
$
\Rightarrow\bigwedge_{S\in\mathcal{S}_i} I(S)
\vdash_{\mathbf{FL_e}}
\Rightarrow I(T).
$
For the number of lines in the proof, first note that for each
$T\in\mathcal{S}_{i+1}$ in case $(i)$, the number of lines is clearly
$O(|\mathcal{S}_i|)\leq O(|\mathcal{S}|)$, by the first property. In case
$(iii)$, the number of lines is
$|\mathcal{S}_i|^{O(1)}\leq|\mathcal{S}|^{O(1)}$, again by the first
property. Since the number of sequents $T$ is
$O(|\mathcal{S}_{i+1}|)\leq O(|\mathcal{S}|)$, we obtain an
$\mathbf{FL_e}$-proof with $|\mathcal{S}|^{O(1)}$ many lines witnessing
\[
\Rightarrow\bigwedge_{S\in\mathcal{S}_i} I(S)
\vdash_{\mathbf{FL_e}}
\Rightarrow\bigwedge_{S\in\mathcal{S}_{i+1}} I(S).
\]
Now, we use these four properties to prove the claim. First, by the first
property, we have $\mathcal{U}_i\subseteq\mathcal{S}_i$ and
$|\mathcal{S}_i|\leq|\mathcal{S}|$ for any $i\in\mathbb{N}$. We claim that
at some point $N\in\mathbb{N}$, there is no unit in
$\mathcal{S}_N-\mathcal{U}_N$. Otherwise, the sequence
$\{\mathcal{U}_i\}_{i\in\mathbb{N}}$ would be a strictly increasing
sequence of sets whose sizes are bounded by $|\mathcal{S}|$, which is
impossible. Now, let $N$ be a stage at which the two sequences
$\{\mathcal{S}_i\}_{i\in\mathbb{N}}$ and
$\{\mathcal{U}_i\}_{i\in\mathbb{N}}$ stabilize. Notice that $N$ is bounded
by the number of atoms occurring in $\mathcal{S}$, and hence
$N\leq|\mathcal{S}|$.

We claim that either $(\, \Rightarrow p)$ or $(\, \Rightarrow q)$ appears in $\mathcal{S}_N$. Let us assume otherwise. As $\mathcal{S} \vdash_{\LKb} \, \Rightarrow p \vee q$, we have $\LKb \vdash \bigwedge_{S \in \mathcal{S}} I(S) \Rightarrow p \vee q$. Now, by the third property, we reach $\LKb \vdash \bigwedge_{S \in \mathcal{S}_N} I(S) \Rightarrow p \vee q$, which is equivalent to $\mathcal{S}_N \vdash_{\LKb} \, \Rightarrow p \vee q$. Therefore, any Boolean valuation satisfying $\mathcal{S}_N$ must validate $(\, \Rightarrow p \vee q)$.
Define the valuation $V$ by setting $V(u)=1$ if and only if $(\, \Rightarrow u) \in \mathcal{U}_N$. We claim that $V(S)=1$ for any $S \in \mathcal{S}_N$. If $S$ is a unit, then $S=(\, \Rightarrow u)$. As there is no unit in $\mathcal{S}_N-\mathcal{U}_N$, we must have $(\, \Rightarrow u) \in \mathcal{U}_N$. Therefore, $V(u)=1$, which implies $V(S)=1$. If $S$ is not a unit, then $S=(C_1, \cdots, C_m \Rightarrow r)$, where $m$ is non-zero. Therefore, at least one atom $s$ appears in the antecedent of $S$. By the second property, since $s$ appears in a non-unit sequent $S \in \mathcal{S}_N$, we must have $(\, \Rightarrow s) \notin \mathcal{U}_N$. Hence, $V(s)=0$, which implies $V(S)=1$. Therefore, $V$ validates all sequents in $\mathcal{S}_N$.

Now, by the first property, $\mathcal{U}_i \subseteq \mathcal{S}_i$ for any $i \in \mathbb{N}$. Hence, by the assumption that $(\, \Rightarrow p)$ and $(\, \Rightarrow q)$ are not in $\mathcal{S}_N$, they are also outside of $\mathcal{U}_N$. Thus, $V(p)=V(q)=0$. Therefore, $V$ does not validate $(\, \Rightarrow p \vee q)$, which is a contradiction with $\mathcal{S}_N \vdash_{\LKb} \, \Rightarrow p \vee q$. Therefore, either $(\, \Rightarrow p)$ or $(\, \Rightarrow q)$ appears in $\mathcal{S}_N$. Hence, either $\Rightarrow \bigwedge_{S \in \mathcal{S}_N} I(S) \vdash_{\mathbf{FL_e}} \, \Rightarrow p$ or $\Rightarrow \bigwedge_{S \in \mathcal{S}_N} I(S) \vdash_{\mathbf{FL_e}} \, \Rightarrow q$. Note that the proofs have $O(|\mathcal{S}_N|)\leq O(|\mathcal{S}|)$ many lines.
By the fourth property, for any $i \in \mathbb{N}$, we have 
\[
\, \Rightarrow \bigwedge_{S \in \mathcal{S}_i} I(S) \vdash_{\mathbf{FL_e}} \, \Rightarrow \bigwedge_{S \in \mathcal{S}_{i+1}} I(S)
\]
with a proof containing $|\mathcal{S}|^{O(1)}$ many lines. As $N \leq |\mathcal{S}|$, we obtain a proof of either $\, \Rightarrow \bigwedge_{S \in \mathcal{S}} I(S) \vdash_{\mathbf{FL_e}} \, \Rightarrow p$ or $\, \Rightarrow \bigwedge_{S \in \mathcal{S}} I(S) \vdash_{\mathbf{FL_e}} \, \Rightarrow q$ with $|\mathcal{S}|^{O(1)}$ many lines. Therefore, there is an $\mathbf{FL_e}$-proof $\tau$ of $\mathcal{S} \vdash_{\mathbf{FL_e}} \, \Rightarrow p$ or $\mathcal{S} \vdash_{\mathbf{FL_e}} \, \Rightarrow q$ such that $l(\tau) \leq |\mathcal{S}|^{O(1)}$.
\end{proof}

Finally, we have developed enough machinery to prove the main theorem of this section. Define the forgetful function $f:\mathcal{L}_! \to \mathcal{L}_b$ by $f(p)=p$ for any atom $p$, $f(c)=c$ for any constant $c \in \mathcal{L}_!$, $f(A \circ B)=f(A) \circ f(B)$ for any $\circ \in \{\wedge, \vee, \to, *\}$, and $f(!A)=A$. It is clear that $f$ maps $\LKajab$-provability to $\LKb$-provability for $!$-free sequents.

\begin{theorem} \label{FeasibleDP}
Any structural calculus $G$ such that $L_G \subseteq \mathsf{LK_!}$ has the feasible disjunction property. 
\end{theorem}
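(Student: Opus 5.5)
Starting from a $G$-proof $\pi$ of $(\,\Rightarrow A\vee B)$, I would first apply Theorem~\ref{MainTheorem}. This gives a set $\mathcal{S}_\pi$ of sequents, each either Horn or of the form $\langle !C\rangle\Rightarrow !\langle !C\rangle$, together with a $G$-proof $\sigma_\pi$ of $\Rightarrow\bigwedge I(\mathcal{S}_\pi^s)$. Both have size polynomial in $l(\pi)$, and $\mathcal{S}_\pi\vdash_{\LKajab}\,\Rightarrow (A\vee B)^t$. Since $(A\vee B)^t=(A^t\vee B^t)\wedge\langle A\vee B\rangle$ and $\LKajab\vdash A^t\Rightarrow\langle A\rangle$ by Lemma~\ref{TranslationAndAtoms} (and likewise for $B$), cutting yields $\mathcal{S}_\pi\vdash_{\LKajab}\,\Rightarrow\langle A\rangle\vee\langle B\rangle$.

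Next I would eliminate the exponentials. I would apply the forgetful map $f$, setting $f(!C)=C$. Each non-Horn assumption $\langle !C\rangle\Rightarrow !\langle !C\rangle$ then becomes $\langle !C\rangle\Rightarrow\langle !C\rangle$, an identity, and can simply be dropped. Horn sequents contain no $!$ and are unchanged, and the goal $\Rightarrow\langle A\rangle\vee\langle B\rangle$ is also $!$-free. Since $f$ preserves provability from assumptions (apply $f$ to the whole $\LKajab$-derivation), we get $\mathcal{H}\vdash_{\LKb}\,\Rightarrow\langle A\rangle\vee\langle B\rangle$, where $\mathcal{H}\subseteq\mathcal{S}_\pi$ is the Horn part. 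Theorem~\ref{UnitPropagation} then gives, say, $\mathcal{H}\vdash^{\tau}_{\mathbf{FL_e}}\,\Rightarrow\langle A\rangle$ with $l(\tau)\le|\mathcal{H}|^{O(1)}\le l(\pi)^{O(1)}$.

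Finally I would pull this back to $G$ using the standard substitution $s$. Substitution instances of $\mathbf{FL_e}$-derivations are again derivations, so $\mathcal{H}^s\vdash^{\tau^s}_{\mathbf{FL_e}}\,\Rightarrow A$, with the same number of lines, because $\langle A\rangle^s=A$. Each assumption in $\mathcal{H}^s$ is derived in $G$ from $\Rightarrow\bigwedge I(\mathcal{S}_\pi^s)$, which $\sigma_\pi$ provides. For each one I extract the conjunct $I(H^s)$ using $(\wedge\!\to)$, then recover the sequent $H^s$ from $\Rightarrow I(H^s)$ by cutting with the canonical proofs that unfold $\bigast$ and $\to$. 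This costs $O(|H|)$ lines per assumption, hence $l(\pi)^{O(1)}$ lines in total. Concatenating $\sigma_\pi$, these extractions and $\tau^s$ gives a $G$-proof $\sigma$ of $\Rightarrow A$ (or of $\Rightarrow B$) with $l(\sigma)\le l(\pi)^{O(1)}$.

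The main obstacle I anticipate is the bookkeeping needed to keep the bounds honest. Theorem~\ref{MainTheorem} assumes tree-like proofs, so I would first invoke Theorem~\ref{Lem: Equivalence Of Frege} to convert $\pi$ to tree-like form with only a polynomial blow-up. I also need to check that $f$ sends each $\LKajab$ rule instance to an $\LKb$ rule instance or to a trivial step: $(R!)$ and $(L!)$ become identities, while $(W!)$ and $(C!)$ become weakening and contraction. If the Horn sequents are stated over the angled language $\mathcal{L}^+$, this is harmless, since angled atoms are just atoms for the purposes of Theorem~\ref{UnitPropagation}.
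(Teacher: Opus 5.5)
Your proposal is correct and follows the paper's proof essentially step for step. The steps are: Theorem~\ref{MainTheorem}; reduction to $\langle A\rangle\vee\langle B\rangle$ via Lemma~\ref{TranslationAndAtoms}; the forgetful map $f$ to discard the $!$-assumptions; Theorem~\ref{UnitPropagation} on the Horn part; and pulling back along the standard substitution $s$ using $\sigma_\pi$. Your extra bookkeeping only fills in details the paper leaves implicit. Tree-likeness is already handled inside Theorem~\ref{MainTheorem}, and extracting each assumption from $\bigwedge I(\mathcal{S}_\pi^s)$ stays polynomial even though projecting a conjunct can cost $O(|\mathcal{S}_\pi|)$ lines rather than $O(|H|)$.
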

\begin{proof}
Let $\pi$ be a $G$-proof of $\Rightarrow A \vee B$. By Theorem \ref{MainTheorem}, we obtain a $G$-proof $\sigma_{\pi}$ and a set $\mathcal{S}_{\pi}$ consisting of Horn sequents or sequents of the form $\langle !C \rangle \Rightarrow !\langle !C \rangle$ such that $|\mathcal{S}_{\pi}| \leq l(\pi)^{O(1)}$ and
\begin{center}
    $\mathcal{S}_{\pi} \vdash_{\LKajab} \, \Rightarrow (A \vee B)^t
    \quad \text{and} \quad
    G \vdash^{\sigma_{\pi}} (\, \Rightarrow \bigwedge I(\mathcal{S}_{\pi}^s))$.
\end{center}
Set $\mathcal{T}$ to be the subset of $\mathcal{S}_{\pi}$ consisting only of Horn sequents. By Definition \ref{Translation}, $\LKajab \vdash (A \vee B)^t \Rightarrow A^t \vee B^t$. Hence, by Lemma \ref{TranslationAndAtoms}, we obtain
$\mathcal{S}_{\pi} \vdash_{\LKajab} \, \Rightarrow \langle A \rangle \vee \langle B \rangle$.
Now, applying the forgetful function $f$, note that
$f(\langle !C \rangle \Rightarrow !\langle !C \rangle)
= \langle !C \rangle \Rightarrow \langle !C \rangle$,
which is $\LKb$-provable. Since Horn sequents are $!$-free, we therefore obtain
$\mathcal{T} \vdash_{\LKb} \, \Rightarrow \langle A \rangle \vee \langle B \rangle$.
Then, by Theorem \ref{UnitPropagation}, there is a $\mathbf{FL_e}$-proof $\tau$ with
$l(\tau) \leq |\mathcal{T}| \leq l(\pi)^{O(1)}$ such that either
    $\mathcal{T} \vdash_{\mathbf{FL_e}}^{\tau} \, \Rightarrow \langle A \rangle$
   or
    $\mathcal{T} \vdash_{\mathbf{FL_e}}^{\tau} \, \Rightarrow \langle B \rangle$.
Using the standard substitution and the $G$-proof $\sigma_{\pi}$ of
$(\, \Rightarrow \bigwedge I(\mathcal{T})^s)$ and the facts $\mathbf{FL_e}\subseteq G$ and $|\mathcal{T}|\leq |\mathcal{S}_{\pi}| \leq l(\pi)^{O(1)}$, we obtain a $G$-proof of $(\, \Rightarrow A)$ or $(\, \Rightarrow B)$ with $|\pi|^{O(1)}$ number of lines.
\end{proof}

\begin{theorem}\label{Thm: MainFrege}
For any structural logic $L \subseteq \mathsf{LK_!}$, the system $L$-Frege enjoys feasible disjunction property.
\end{theorem}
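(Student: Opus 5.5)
The plan is to transfer Theorem~\ref{FeasibleDP} from structural calculi to Frege systems using the line-equivalence of Theorem~\ref{Lem: Equivalence Of Frege}. Fix a structural logic $L \subseteq \mathsf{LK_!}$. By definition it has a structural calculus $G$ with $L_G = L$, and by Corollary~\ref{Cor: ExistenceOfFrege}$(i)$ it has a Frege system $L$-$\F$. Since any two Frege systems for $L$ are line-equivalent, it suffices to prove the property for one of them.

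Let $\pi$ be an $L$-$\F$ proof of $A \vee B$. The argument has three steps.

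\begin{itemize}
\item[(1)] By Theorem~\ref{Lem: Equivalence Of Frege}, there is a $G$-proof $\pi'$ of $(\,\Rightarrow A \vee B)$ with $l(\pi') \leq l(\pi)^{O(1)}$. The only point to check is that the formula $A\vee B$ corresponds to the sequent $(\,\Rightarrow A\vee B)$ under the simulation. This is standard, since $I(\,\Rightarrow C) = 1 \to C$ and $C$ are interderivable in a constant number of lines.
\item[(2)] Apply Theorem~\ref{FeasibleDP} to $\pi'$, which is legitimate because $L_G \subseteq \mathsf{LK_!}$. This yields a $G$-proof $\sigma'$ of $(\,\Rightarrow A)$ or of $(\,\Rightarrow B)$ with $l(\sigma') \leq l(\pi')^{O(1)}$.
\item[(3)] Translate $\sigma'$ back via Theorem~\ref{Lem: Equivalence Of Frege}. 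This gives an $L$-$\F$ proof $\sigma$ of $A$ or of $B$ with $l(\sigma) \leq l(\sigma')^{O(1)}$.
\end{itemize}

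Composing the three polynomial bounds gives $l(\sigma) \leq l(\pi)^{O(1)}$, as required.

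The main obstacle is the bookkeeping in the equivalence between Frege systems and structural calculi. Theorem~\ref{Lem: Equivalence Of Frege} is stated for systems proving the same logic. I therefore need that the simulation sends a proof of a formula $C$ to a proof of the sequent $(\,\Rightarrow C)$, and conversely, with only polynomial growth in the number of lines. I would invoke it in exactly that form, identifying a formula with its singleton-succedent sequent.
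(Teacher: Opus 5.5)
Your proposal is correct and follows the same route as the paper. The paper obtains this theorem from Theorem~\ref{FeasibleDP} by transferring it along the line-equivalence between a structural calculus $G$ for $L$ and $L$-Frege, which is exactly your three-step composition of polynomial bounds; identifying a formula $C$ with the sequent $(\,\Rightarrow C)$ is the only bookkeeping needed, and it is routine.
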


\section{Exponential Separations}

In this section, we prove the promised exponential separations. First, we recall the known lower bounds for substructural systems.

\subsection{Substructural Clique-Color Formula}

Let $n,k,m\geq 1$. We encode graphs on the vertex set $[n]=\{1,\dots,n\}$ using propositional variables
$
\{p_{i,j}\mid i\neq j\in[n]\}.
$
An assignment to these variables determines a graph on $[n]$, where $p_{i,j}$ is assigned the value $1$ precisely when the vertices $i$ and $j$ are joined by an edge. Recall that a graph contains a $k$-\emph{clique} if it admits a complete subgraph on $k$ vertices, and that it is \emph{$m$-colorable} if its vertices can be colored with $m$ colors so that adjacent vertices receive different colors.

To formalize the existence of a $k$-clique, we introduce variables
$
\{r_{u,i}\mid u\in[k],\, i\in[n]\}.
$
Intuitively, the variables $r_{u,i}$ describe a map from $[k]$ into $[n]$, with $r_{u,i}=1$ expressing that $u$ is sent to $i$. Consider the collection of formulas:

\begin{itemize}
\item[$\bullet$]
$\displaystyle \bigvee_{i\in[n]} r_{u,i}$, for each $u\in[k]$,
\item[$\bullet$]
$\displaystyle \neg r_{u,i}\vee \neg r_{u,j}$, for every $u\in[k]$ and distinct $i,j\in[n]$,
\item[$\bullet$]
$\displaystyle \neg r_{u,i}\vee \neg r_{v,i}$, for all distinct $u,v\in[k]$ and $i\in[n]$,
\item[$\bullet$]
$\displaystyle \neg r_{u,i}\vee \neg r_{v,j}\vee p_{i,j}$, for all distinct $u,v\in[k]$ and distinct $i,j\in[n]$.
\end{itemize}
We write $Clique^k_n(\bar p,\bar r)$ for the conjunction of all these clauses. The first three groups ensure that the variables $\bar r$ encode an injective map from $[k]$ to $[n]$, while the last group guarantees that any two selected vertices are adjacent. Consequently, $Clique^k_n(\bar p,\bar r)$ expresses that the graph represented by $\bar p$ contains a $k$-clique. 

Next we encode colorings. For this purpose, introduce variables
$
\{s_{i,a}\mid i\in[n],\, a\in[m]\},
$
where $s_{i,a}=1$ is intended to mean that vertex $i$ receives color $a$. Define $Color^m_n(\bar p,\bar s)$ as the conjunction of the following formulas:

\begin{itemize}
\item[$\bullet$]
$\displaystyle \bigvee_{a\in[m]} s_{i,a}$, for every $i\in[n]$,
\item[$\bullet$]
$\displaystyle \neg s_{i,a}\vee \neg s_{i,b}$, for all distinct $a,b\in[m]$ and $i\in[n]$,
\item[$\bullet$]
$\displaystyle \neg s_{i,a}\vee \neg s_{j,a}\vee \neg p_{i,j}$, for all $a\in[m]$ and distinct $i,j\in[n]$.
\end{itemize}
The first two groups of clauses ensure that each vertex receives exactly one color, whereas the third forbids adjacent vertices from sharing a color. Thus $Color^m_n(\bar p,\bar s)$ expresses that the graph encoded by $\bar p$ admits a proper $m$-coloring.

Since a graph containing a $(k+1)$-clique cannot be $k$-colored, the formula
\[
Clique^{k+1}_n(\bar{p}, \bar{r}) \to \neg Color^{k}_n(\bar{p}, \bar{s})
\]
is a classical tautology. Hrubeš \cite{Hrubes} modified this formula to obtain an intuitionistic tautology and proved that it requires exponential many lines proofs in $\mathbf{LJ}$, and hence in the $\mathsf{IPC}$-Frege system.

\begin{theorem}\label{Hrubes} \cite{Hrubes}
Let $\bar{p}=\{p_{i,j} \mid 1 \leq i \neq j \leq n\}$, $\bar{q}=\{q_{i,j} \mid 1 \leq i \neq j \leq n\}$, $\bar{r}$, and $\bar{s}$ be pairwise disjoint sets of variables, and let $k=\lfloor\sqrt{n}\rfloor$. Then the formulas
\[
\Theta^{\bot}_n := \bigwedge_{1 \leq i \neq j \leq n} (p_{i,j} \vee q_{i,j}) \to \neg Color^{k}_n(\bar{p}, \bar{s}) \vee \neg Clique^{k+1}_n(\neg \bar{q}, \bar{r})
\]
are intuitionistic tautologies, and every $\LJ$-proof, equivalently every $\mathsf{IPC}$-Frege proof, of $\Theta^{\bot}_n$ contains at least $2^{\Omega(n^{1/4})}$ lines.
\end{theorem}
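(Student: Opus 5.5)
We would prove the two halves separately: first intuitionistic validity, then the lower bound by a monotone feasible-interpolation argument in the style of the proof of Theorem~\ref{FeasibleDP}.

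\textbf{Validity.} Work in $\mathbf{LJ}$ and apply $(L\vee)$ to every hypothesis $p_{i,j}\vee q_{i,j}$. This splits the proof into finitely many branches. In each branch we have a set $E$ of pairs for which $p_{i,j}$ is assumed, with $q_{i,j}$ assumed for all other pairs. Let $E$ also denote the graph with these edges; there are two cases.
\begin{itemize}
\item If $E$ is not $k$-colorable, then the classical tautology $\bigwedge_{ij\in E}p_{i,j}\to\neg Color^k_n(\bar p,\bar s)$ is a negated statement under the hypotheses. By Glivenko's theorem it is therefore intuitionistically provable, and we prove the left disjunct.
\item If $E$ is $k$-colorable, it has no $(k+1)$-clique. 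So any injective map $[k+1]\to[n]$ hits a non-edge $ij$, for which $q_{i,j}$ holds. Then $Clique^{k+1}_n(\neg\bar q,\bar r)$ yields $\neg q_{i,j}$ and hence $\bot$. Again by Glivenko this gives $\neg Clique^{k+1}_n(\neg\bar q,\bar r)$, the right disjunct.
\end{itemize}

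\textbf{Lower bound.} Suppose $\Theta^\bot_n$ has an $\mathbf{LJ}$-proof $\pi$ with $l$ lines; we may take it tree-like at polynomial cost. For a graph $E$, substitute $\top$ for $p_{i,j}$ when $ij\in E$, and $\top$ for $q_{i,j}$ when $ij\notin E$. Since the substitution makes the antecedent trivially provable, this yields, with $l+O(n^2)$ lines, a proof of $\neg Color^k_n(\bar p,\bar s)\vee\neg Clique^{k+1}_n(\neg\bar q,\bar r)$ under the substitution.

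Next we mimic Theorem~\ref{MainTheorem}. Translate $\pi$ once and for all into a set $\mathcal S_\pi$ of Horn sequents, of size $l^{O(1)}$, over angled atoms. The substitution then only adds the units $(\Rightarrow p_{i,j})$ for $ij\in E$ and $(\Rightarrow q_{i,j})$ for $ij\notin E$. Unit propagation (Theorem~\ref{UnitPropagation}) decides which of $\langle\neg Color\rangle$ and $\langle\neg Clique\rangle$ becomes a unit. Unit propagation over Horn clauses is computable by a monotone circuit of size $|\mathcal S_\pi|^{O(1)}$ in the input units: one gate layer per propagation round, with AND gates for clause bodies and OR gates for derivations of the same atom.

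We therefore get a monotone circuit $C$ in the edge variables of size $l^{O(1)}$ with the following properties.
\begin{itemize}
\item If $C(E)=1$, then $\neg Color$ is provable under $E$, so $E$ is not $k$-colorable.
\item If $C(E)=0$, then $\neg Clique$ is provable, so $E$ has no $(k+1)$-clique.
\end{itemize}
The $q$-units must be treated as negated edge inputs. The plan is to make only $\langle\neg Color\rangle$-derivations depend on $\bar p$ monotonically, mirroring Hrubeš's observation that $\bar q$ occurs only negatively in the clique part.

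Consequently $C$ separates $(k+1)$-cliques from complete $k$-partite graphs. By the Alon--Boppana strengthening of Razborov's bound, with $k=\lfloor\sqrt n\rfloor$, such a circuit has size $2^{\Omega(\sqrt k)}=2^{\Omega(n^{1/4})}$. Hence $l\ge 2^{\Omega(n^{1/4})}$.

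\textbf{Main obstacle.} The delicate step is guaranteeing that the extracted circuit is genuinely monotone in $\bar p$. The $q$-side must contribute only in an antitone way, which requires tracking polarity of $\bar q$ through the translation, since $Clique$ is applied to $\neg\bar q$. We would try first to prove that every Horn sequent produced from $\pi$ containing a $q$-atom can only feed derivations of $\langle\neg Clique\rangle$-side atoms. Failing that, we would fall back on Hrubeš's original argument: apply the feasible disjunction property pointwise, then monotonize using that $\Theta^\bot_n$ is monotone in the $p/q$ split.
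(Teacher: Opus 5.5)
The paper gives no proof of this statement; it is quoted from Hrubeš. Your validity argument (case split on each $p_{i,j}\vee q_{i,j}$, then Glivenko on each branch) is fine. Your lower-bound outline also has the right shape: one Horn set $\mathcal{S}$ of size $l^{O(1)}$ from Theorem~\ref{MainTheorem}, unit-propagation closure as a polynomial-size circuit, then Alon--Boppana.

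\textbf{The gap: monotonicity.} The step you flag as the main obstacle is a genuine gap, and neither of your remedies closes it. As you define it, $C$ has the form $D(\bar e,\neg\bar e)$ with $D$ monotone: $\langle p_{i,j}\rangle$-units go in for edges and $\langle q_{i,j}\rangle$-units for non-edges. Such a function is not monotone in $\bar e$.

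Your first remedy, that $q$-atoms only feed the clique side, is false. Nothing prevents $\pi$ from deriving, in the branch for a non-$k$-colourable $E$, the sequent $\{p_{i,j}\}_{ij\in E},\{q_{i,j}\}_{ij\notin E}, Color^k_n(\bar p,\bar s)\Rightarrow\;$ and then applying $(R\to)$. This happens if the $q$'s are weakened in first, as in your own case-split proof. The $(R\to)$ case of Theorem~\ref{MainTheorem} then adds a Horn clause whose body contains the atoms $\langle q_{i,j}\rangle$ ($ij\notin E$) and whose head is $\langle\neg Color^k_n(\bar p,\bar s)\rangle$. For $E\cup\{ij\}$ that clause no longer fires.

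Your second remedy, applying the disjunction property pointwise and then monotonizing, only shows that the set of $E$ on which the colour side is provable is upward closed. Short proofs for each $E$ say nothing about a polynomial-size monotone circuit computing a separating function, and that circuit is what Alon--Boppana needs.

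\textbf{The missing idea: feed every $q$-unit as the constant $1$.} Define $C(\bar e)=1$ iff $\langle\neg Color^k_n(\bar p,\bar s)\rangle$ lies in the unit-propagation closure of
$\mathcal{S}\cup\{\Rightarrow\langle p_{i,j}\rangle : e_{i,j}=1\}\cup\{\Rightarrow\langle q_{i,j}\rangle : i\neq j\}$.
This is a monotone circuit in $\bar e$ of size $l^{O(1)}$.
\begin{itemize}
\item If $C(E)=1$, property (iii) and soundness make $\bigwedge_{ij\in E}p_{i,j}\wedge\bigwedge_{i\neq j}q_{i,j}\to\neg Color^k_n(\bar p,\bar s)$ valid. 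Since $\bar q$ does not occur in $Color^k_n(\bar p,\bar s)$, you may substitute $\top$ for $\bar q$, so $E$ is not $k$-colourable.
\item If $C(E)=0$, monotonicity of the closure means $\langle\neg Color^k_n(\bar p,\bar s)\rangle$ is not derivable from the smaller unit set for $E$ either. That set still covers every $p_{i,j}\vee q_{i,j}$. So the minimal-model argument in the proof of Theorem~\ref{UnitPropagation} forces $\langle\neg Clique^{k+1}_n(\neg\bar q,\bar r)\rangle$, and $E$ has no $(k+1)$-clique.
\end{itemize}

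\textbf{A smaller imprecision.} Substituting $\top$ into $\pi$ is not the same as adding units, because it changes the formulas and hence the angled atoms. Instead, keep $\pi$ fixed and add units on $\langle p_{i,j}\rangle$ and $\langle q_{i,j}\rangle$. Also add the clauses $\langle p_{i,j}\rangle\Rightarrow\langle p_{i,j}\vee q_{i,j}\rangle$ and $\langle q_{i,j}\rangle\Rightarrow\langle p_{i,j}\vee q_{i,j}\rangle$, plus those for the big conjunction, so that the translated antecedent becomes derivable.
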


A negation-free variant of these encodings was subsequently proposed by Jeřábek~\cite{jerabek}. The idea is to replace negative literals by fresh propositional variables. More precisely, for every variable $s_{i,l}$ and $r_{i,l}$, one introduces new variables $s'_{i,l}$ and $r'_{i,l}$ intended to represent $\neg s_{i,l}$ and $\neg r_{i,l}$, respectively:

\begin{definition} \cite[Definition 6.28]{jerabek} \label{def: alpha beta}
For $k \leq n$, let
\[
\alpha^k_n (\bar{p}, \bar{s}, \bar{s'}) := \bigvee_{i\in [n]} \bigwedge_{l \in [k]} s'_{i,l} \;\;\vee\;\; \bigvee_{i\neq j \in [n]} \bigvee_{l \in [k]} (s_{i,l} \wedge s_{j,l} \wedge p_{i,j}),
\]
\[
\beta^k_n (\bar{q}, \bar{r}, \bar{r'}) := \bigvee_{l \in [k]} \bigwedge_{i \in [n]} r'_{i,l} \;\;\vee\;\; \bigvee_{i \neq j \in [n]} \bigvee_{l \neq m \in [k]} (r_{i,l} \wedge r_{j,m} \wedge q_{i,j}).
\]
Define the negation-free version of Hrube\v{s}'s formulas by
\[
\Theta_{n,k}:=  \big(\bigwedge_{i,j} (p_{i,j} \vee q_{i,j})\big) \; \to 
\]
\[
[
(\bigwedge_{i,l}(s_{i,l} \vee s'_{i,l}) \to \alpha^k_n (\bar{p}, \bar{s}, \bar{s'})) \vee  (\bigwedge_{i,l}(r_{i,l} \vee r'_{i,l}) \to \beta^{k+1}_n (\bar{q}, \bar{r}, \bar{r'}))].
\]
% \begin{align*}
% \Theta_{n,k}:= & \big(\bigwedge_{i,j} (p_{i,j} \vee q_{i,j})\big) \; \to  \\
% &[
% (\bigwedge_{i,l}(s_{i,l} \vee s'_{i,l}) \to \alpha^k_n (\bar{p}, \bar{s}, \bar{s'})) \vee  \\
% &(\bigwedge_{i,l}(r_{i,l} \vee r'_{i,l}) \to \beta^{k+1}_n (\bar{q}, \bar{r}, \bar{r'}))].
% \end{align*}
Note that
\[
 Color^k_n(\bar{p}, \bar{s}) = \neg \alpha^k_n(\bar{p}, \bar{s}, \neg \bar{s}) \qquad 
Clique^k_n(\bar{p}, \bar{r}) = \neg \beta^k_n(\neg \bar{p}, \bar{r}, \neg \bar{r}).
\] 
\end{definition}
%The lower bound of Theorem~\ref{Hrubes} also holds for~$\Theta_n$ \cite{jerabek}. 

\begin{theorem}(\cite[Theorem 6.37]{jerabek}) \label{JerabekAsli}
The formulas $\Theta_n := \Theta_{n,\lfloor \sqrt{n} \rfloor}$ are intuitionistic tautologies and every $\LJ$-proof, equivalently every $\mathsf{IPC}$-Frege proof, of $\Theta^{\bot}_n$ contains at least $2^{\Omega(n^{1/4})}$ lines.
\end{theorem}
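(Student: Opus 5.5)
The plan has two parts: first show that $\Theta_n$ is an intuitionistic tautology, then obtain the lower bound by reducing $\Theta_n$ to Hrube\v{s}'s formulas $\Theta^{\bot}_n$ of Theorem~\ref{Hrubes}. I read the lower-bound clause as concerning $\Theta_n$; the $\Theta^{\bot}_n$ in the statement appears to be a typo, since that case is just Theorem~\ref{Hrubes}.

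\emph{Tautology.} I would argue in a Kripke model, or equivalently by an $\mathbf{LJ}$ case analysis. Assume $\bigwedge_{i,j}(p_{i,j}\vee q_{i,j})$. This decides, for each pair $i\neq j$, whether $p_{i,j}$ or $q_{i,j}$ holds, and so fixes a graph $E$ whose edges are the pairs where $p_{i,j}$ was chosen. Since $k=\lfloor\sqrt n\rfloor$, at most one of the following holds: $E$ is $k$-colourable, or the complement of $E$ contains a $(k+1)$-clique among the pairs where $q$ was chosen. Suppose $E$ is not $k$-colourable. Under the hypothesis $\bigwedge_{i,l}(s_{i,l}\vee s'_{i,l})$, every atom $s_{i,l}$ is decided. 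Either some vertex $i$ receives no colour, which gives $\bigwedge_l s'_{i,l}$, or the decided choices yield a conflict $s_{i,l}\wedge s_{j,l}\wedge p_{i,j}$. Exploring this finite case split shows that $\alpha^k_n$ holds, so the first disjunct is proved. Otherwise there is no $(k+1)$-clique among the $q$-pairs, and a symmetric argument with $r,r'$ yields $\beta^{k+1}_n$. Each branch uses only decided atoms, so the classical counting argument is intuitionistically valid.

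\emph{Lower bound by substitution.} Suppose $\Theta_n$ has an $\mathsf{IPC}$-Frege proof with $N$ lines. Applying the substitution $s'_{i,l}\mapsto\neg s_{i,l}$, $r'_{i,l}\mapsto\neg r_{i,l}$ to every line gives an $N$-line proof of
\[
\bigwedge(p\vee q)\to\big[(\textstyle\bigwedge(s\vee\neg s)\to\alpha^k_n(\bar p,\bar s,\neg\bar s))\vee(\bigwedge(r\vee\neg r)\to\beta^{k+1}_n(\bar q,\bar r,\neg\bar r))\big].
\]
Next I would append polynomially many lines (polynomial in $n$) to derive $\Theta^{\bot}_n$.
\begin{itemize}
\item[(a)] $\neg\neg\bigwedge_{i,l}(s_{i,l}\vee\neg s_{i,l})$ is provable with $O(nk)$ lines, since each excluded-middle instance is doubly negated and these combine through conjunction. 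The same holds for the $r$-instances.
\item[(b)] From $X\to\alpha$ and $\neg\neg X$ one derives $\neg\neg\alpha$. By Definition~\ref{def: alpha beta}, $\neg\neg\alpha^k_n(\bar p,\bar s,\neg\bar s)$ is exactly $\neg Color^k_n(\bar p,\bar s)$.
\item[(c)] $\beta^{k+1}_n(\bar q,\bar r,\neg\bar r)\to\beta^{k+1}_n(\neg\neg\bar q,\bar r,\neg\bar r)$ holds by monotonicity, using $q\to\neg\neg q$ at each occurrence. Its double negation is $\neg Clique^{k+1}_n(\neg\bar q,\bar r)$.
\end{itemize}
Combining (a)--(c) by disjunction elimination under the common antecedent gives $\Theta^{\bot}_n$ in $N+n^{O(1)}$ lines. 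Theorem~\ref{Hrubes} then forces $N\ge 2^{\Omega(n^{1/4})}$.

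\emph{Main obstacle.} The hard step is the tautology argument, in particular presenting the finite case analysis over decided atoms so that it genuinely stays intuitionistic. The reduction itself is routine once step (b) is seen to match the negation identities in Definition~\ref{def: alpha beta}. The only care needed there is that the polynomial-size derivations in (a) and (c) are uniform in $n$.
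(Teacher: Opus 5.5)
\textbf{Verdict: the tautology half has a genuine gap; the lower-bound half is a sound route different from the paper's.}

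\medskip

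\textbf{The gap in the tautology half.} Your combinatorial dichotomy is stated backwards. A $(k+1)$-clique ``among the pairs where $q$ was chosen'' is an independent set of $E$. That is perfectly compatible with $E$ being $k$-colourable, and the absence of such a clique gives you nothing toward $\beta^{k+1}_n$. What $\beta^{k+1}_n$ needs is two chosen slot-vertices forming a pair with $q_{i,j}$. In other words, the chosen vertices must \emph{not} be a clique in the graph of non-$q$ pairs, and that graph lies inside $E$ because of $p_{i,j}\vee q_{i,j}$. The relevant incompatibility is ``$E$ is $k$-colourable'' versus ``$E$ has a $(k+1)$-clique''. This is plain pigeonhole and has nothing to do with $k=\lfloor\sqrt n\rfloor$.

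More seriously, your ``symmetric argument'' needs the slot choice to be injective. But $\beta^k_n$ as written in Definition~\ref{def: alpha beta} (only $i\neq j$) has no disjunct catching two slots on the same vertex. With that definition the claim is false even classically. Put $p_{i,j}=0$ and $q_{i,j}=1$ for all $i\neq j$. Put $s_{i,1}=1$, $s'_{i,1}=0$, and $s_{i,l}=0$, $s'_{i,l}=1$ for $l\geq 2$. Put $r_{1,l}=1$, $r'_{1,l}=0$ for every $l$, and $r_{i,l}=0$, $r'_{i,l}=1$ for $i\geq 2$. All antecedents hold, yet $\alpha^k_n$ and $\beta^{k+1}_n$ are both false. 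So this step cannot be completed without adding a non-injectivity disjunct such as $\bigvee_{i}\bigvee_{l\neq m}(r_{i,l}\wedge r_{i,m})$ to $\beta$, which any correct version of the formula must contain.

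After that amendment, the argument goes through if you phrase it at a node $w$ forcing $\bigwedge(p\vee q)$. Let $E_w$ be the set of pairs whose $q$-atom is not forced at $w$, so that their $p$-atom is forced. If $E_w$ is not $k$-colourable, every $v\geq w$ forcing $\bigwedge(s\vee s')$ forces $\alpha^k_n$. Otherwise $E_w$ has no $(k+1)$-clique. Then at every $v\geq w$ forcing $\bigwedge(r\vee r')$, one of three things happens: some slot is empty, two slots share a vertex, or two distinct chosen vertices form a pair whose $q$-atom is already forced at $w$.

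\medskip

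\textbf{The lower-bound half.} This part is sound, and it takes a genuinely different route from the paper, which gives no argument and only cites Je\v{r}\'abek. Substituting $\neg s$ and $\neg r$ for the primed atoms preserves the number of lines. The $\neg\neg$-excluded-middle step then yields $\Theta^{\bot}_n$ with $n^{O(1)}$ extra lines, so Theorem~\ref{Hrubes} applies. This also survives the amendment of $\beta$.

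One correction is needed. The equalities stated in Definition~\ref{def: alpha beta} are not syntactic identities: $Color^k_n$ and $Clique^{k+1}_n$ are built from different clauses, and the indices of $\bar r$ are transposed. What you actually need are the implications $Color^k_n(\bar p,\bar s)\to\neg\alpha^k_n(\bar p,\bar s,\neg\bar s)$ and $Clique^{k+1}_n(\neg\bar q,\bar r)\to\neg\beta^{k+1}_n(\bar q,\bar r,\neg\bar r)$. Each disjunct of $\alpha$ or $\beta$ is refuted by a single clause, so both implications have derivations with polynomially many lines, and you then contrapose. This also makes your step (c) unnecessary.

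\textbf{What your route buys.} It gives a self-contained proof for $\mathsf{IPC}$ that uses Hrube\v{s}'s theorem as a black box. Because it transfers hardness from $\Theta^{\bot}_n$ to $\Theta_n$, it yields lower bounds for $\Theta_n$ only in systems where $\Theta^{\bot}_n$ is already known to be hard.
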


Jalali~\cite{Raheleh} observed that a minor adjustment of $\Theta_{n, k}$ yields a formula provable in $\mathsf{FL_e}$. The resulting formula is
\[
\Theta^{*}_{n,k}:= [\bigast_{i,j} ((p_{i,j} \wedge 1) \vee (q_{i,j} \wedge 1))] \; \to 
\]
\[
\big([\bigast_{i,l} ((s_{i,l} \wedge 1) \vee (s'_{i,l} \wedge 1)) \to \alpha^k_n (\bar{p}, \bar{s}, \bar{s'})] 
\vee [\bigast_{i,l}((r_{i,l} \wedge 1) \vee  (r'_{i,l} \wedge 1)) \to \beta^{k+1}_n (\bar{q}, \bar{r}, \bar{r'})]\big).
\]
As in the previous construction, one sets
$\Theta^*_n:=\Theta^*_{n,\lfloor\sqrt n\rfloor}$.
The main lower-bound result established in~\cite{Raheleh} is the following.

\begin{theorem}[{\cite[Corollary 5.4]{Raheleh}}]
\label{Thm: exp lower bound FLe}
The formulas $\Theta^*_n$ are derivable in $\BFL$. Moreover, every $\LJajab$-proof, equivalently every $\mathsf{LJ_!}$-Frege proof, of $\Theta^{\bot}_n$ contains at least $2^{\Omega(n^{1/4})}$ lines.
\end{theorem}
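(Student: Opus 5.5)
The statement is cited from \cite{Raheleh}, so the plan is to reconstruct its two parts separately: $\BFL$-derivability of $\Theta^*_n$, and the lower bound by reduction to Theorem~\ref{JerabekAsli}.

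\emph{Derivability.} The main tool is that formulas of the form $x\wedge 1$ can be duplicated and discarded in $\BFL$. Indeed, $x\wedge 1 \Rightarrow 1$ holds, and $x\wedge 1 \Rightarrow (x\wedge1)*(x\wedge1)$ follows from axiom $(*\wedge)$ read backwards, via $R*$ and $R\wedge$. Hence the multiplicative conjunctions of $(p\wedge1)\vee(q\wedge1)$ can be used freely in the antecedent.
\begin{itemize}
\item[$\bullet$] Apply $L*$ and then $L\vee$ repeatedly to the antecedent factors. Each branch is then a choice of literals $p_{i,j}$ or $q_{i,j}$, each conjoined with $1$, and hence contractible and weakenable.
\item[$\bullet$] Do the same for the $s$ and $r$ factors after $R\to$.
\item[$\bullet$] In each branch, the classical clique--coloring argument yields a witnessing disjunct of $\alpha$ or $\beta$. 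Since all atoms are available as contractible, weakenable units, the intuitionistic proof of Theorem~\ref{JerabekAsli} can be replayed.
\end{itemize}
The existence proof is all that is needed here; size is irrelevant for this part.

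\emph{Lower bound.} I read the statement's ``$\Theta^{\bot}_n$'' as a typo for $\Theta^*_n$. The plan is to show that a short $\LJajab$-proof of $\Theta^*_n$ yields a short $\LJ$-proof of $\Theta_n$, and then invoke Theorem~\ref{JerabekAsli}.
\begin{itemize}
\item[$\bullet$] Apply the forgetful map $f$ (erasing $!$) to the $\LJajab$-proof. Using the $\LJ$-equivalences $A\wedge 1\Leftrightarrow A$ and $A*B\Leftrightarrow A\wedge B$, and interpreting $0,1,\top,\bot$ classically, this transforms the proof into an $\LJ$-proof of a formula equivalent to $\Theta_n$.
\item[$\bullet$] These equivalences have $\LJ$-proofs polynomial in the formula size, and they are applied compositionally. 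By Theorem~\ref{Lem: Equivalence Of Frege}, the number of lines then increases only polynomially.
\end{itemize}
Hence a proof with $2^{o(n^{1/4})}$ lines would contradict Theorem~\ref{JerabekAsli}.

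The main obstacle is ensuring that the translation is \emph{line}-polynomial rather than merely size-polynomial. My first approach is to do the rewriting at the level of Frege rules rather than sequents, so that each line is replaced by its translation plus constantly many bridging lines obtained from schematic equivalence lemmas.
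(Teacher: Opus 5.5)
The paper gives no proof of this statement; it only cites \cite[Corollary 5.4]{Raheleh}. Your lower-bound half follows the cited route. The problem is in the derivability half, which rests on a false lemma. The sequent $x\wedge 1\Rightarrow (x\wedge 1)*(x\wedge 1)$ is \emph{not} provable in $\mathbf{FL_e}$. The axiom $(*\wedge)$ gives only $(A\wedge 1)*(B\wedge 1)\to A\wedge B$, and its converse is unsound. In any commutative residuated lattice, $b=a\wedge 1\le 1$ yields $b\cdot b\le b$, so the sequent would force every $a\wedge 1$ to be idempotent. In the \L{}ukasiewicz MV-algebra on $[0,1]$, which is a model of $\mathbf{FL_e}$, take $a=1/2$: then $a\wedge 1=1/2$ but $a\odot a=0$. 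So formulas $x\wedge 1$ are weakenable (by $(1w)$ and $L\wedge_2$) but not contractible. Your closing step, replaying an intuitionistic proof inside $\mathbf{FL_e}$, is therefore not available, since $\mathbf{LJ}$-proofs contract arbitrary formulas.

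Contraction is not needed, however, so your branching idea can be completed directly.
\begin{itemize}
\item After $L*$ and $L\vee$ on the antecedent, each leaf contains exactly one of $p_{i,j}\wedge 1$, $q_{i,j}\wedge 1$ per pair.
\item Choose the disjunct of the succedent by $R\vee$ according to the classical clique--coloring argument for that leaf. Then apply $R\to$ and decompose the $s$- (resp.\ $r$-) antecedent the same way.
\item In each resulting leaf, choose by $R\vee$ a disjunct of $\alpha$ (resp.\ $\beta$) witnessed by the leaf.
\end{itemize}
Since $\alpha$ and $\beta$ use only the additive $\wedge$ and $\vee$, the rule $R\wedge$ passes the whole context to each premise. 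So every atom $x$ of the chosen conjunction follows from the single formula $x\wedge 1$ by (id) and $L\wedge_1$, after weakening away all other formulas. This pairing of multiplicative antecedents built from weakenable formulas with additive succedents is exactly what makes $\Theta^*_n$ provable in $\mathbf{FL_e}$.

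For the lower bound, your reading of $\Theta^{\bot}_n$ as $\Theta^*_n$ matches how the statement is used in Theorem~\ref{thm: MainLowerBound}. Your line-count worry also disappears. The homomorphic map $!A\mapsto A$, $*\mapsto\wedge$, $1\mapsto\top$, $0\mapsto\bot$ sends every $\mathsf{LJ_!}$-Frege axiom and rule to an instance of a fixed intuitionistically derivable schema, so lines grow only linearly. You never need to rewrite $A\wedge 1$ inside the proof. You only add one $\mathsf{IPC}$-Frege proof, of size $n^{O(1)}$, that the image of $\Theta^*_n$ implies $\Theta_n$. That proof is built from $p\vee q\Rightarrow (p\wedge\top)\vee(q\wedge\top)$ and its analogues for $s$ and $r$.
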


\subsection{Separation Results}

In this subsection, we finally use the feasible disjunction property proved above to establish the promised exponential gaps.

\begin{definition}
A proof system $P$ \emph{feasibly admits disjunction introduction} if, for any formulas $A_0$ and $A_1$, any $i \in \{0,1\}$, and any $P$-proof $\pi$ of $A_i$, there is a $P$-proof $\sigma$ of $A_0 \vee A_1$ such that $|\sigma| \leq (|\pi|+|A_0|+|A_1|)^{O(1)}$.
\end{definition}

\begin{example}
For any $X \subseteq \{c,w\}$, the sequent calculi $\mathbf{FL_e}X$, $\mathbf{IMALL}X$, $\mathbf{ILL}X$, and their cut-free versions feasibly admit disjunction introduction. The same holds for the $L$-Frege system for any logic $L \supseteq \mathsf{FL_e}$.
\end{example}

The following is the main separation result we prove. We then present some specific cases as interesting corollaries.

\begin{definition}
Let $L$ and $M$ be logics, $Q$ be a proof system for $M$ and $P$ be either a sequent calculus or a Frege system for $L$. We say that $P$ is exponentially weaker than $Q$ if there is a sequence $\{A_n\}_{n \in \mathbb{N}}$ of $\mathsf{FL_e}$-provable formulas such that $A_n$ has a $Q$-proof of size $n^{O(1)}$, while any $P$-proof of $A_n$ has $2^{n^{\Omega(1)}}$ many lines.
\end{definition}

\begin{theorem}\label{thm: MainLowerBound}
Let $L \subseteq \mathsf{LJ_!}$ be a structural logic, $\mathsf{FL_e} \subseteq M \nsubseteq L$ be a logic, and $P$ be a proof system for $M$ that feasibly admits disjunction introduction. Then, $L$-Frege is exponentially weaker than $P$.
\end{theorem}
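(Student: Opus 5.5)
Following the sketch in the introduction, we combine the lower bound of Theorem~\ref{Thm: exp lower bound FLe} with the feasible disjunction property of Theorem~\ref{Thm: MainFrege}. Since $M \nsubseteq L$, fix a formula $C \in M \setminus L$. We may take $C$ in the language of $L$; if the languages differ we restrict to the common fragment. Let $\pi_C$ be a fixed $P$-proof of $C$, whose size is a constant independent of $n$. Set
\[
A_n := \Theta^*_n \vee C .
\]
Two points need care. First, the $A_n$ must be $\mathsf{FL_e}$-provable. Second, the intended proof of $A_n$ in $P$ goes through $C$, and that proof need not live in $\mathsf{FL_e}$. These are compatible: $\Theta^*_n$ is $\mathbf{FL_e}$-provable by Theorem~\ref{Thm: exp lower bound FLe}, so $A_n$ is $\mathsf{FL_e}$-provable by $(R\vee)$.

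\textbf{Upper bound.} Since $P$ feasibly admits disjunction introduction, $\pi_C$ yields a $P$-proof of $A_n$ of size $(|\pi_C| + |\Theta^*_n| + |C|)^{O(1)}$. This is $n^{O(1)}$, because $|\Theta^*_n|$ is polynomial in $n$: the formula has $O(n^2 k)$ clauses, each of polynomial size.

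\textbf{Lower bound.} Let $\pi$ be an $L$-Frege proof of $A_n$. By Theorem~\ref{Thm: MainFrege}, which applies because $L \subseteq \mathsf{LJ_!} \subseteq \mathsf{LK_!}$, we obtain an $L$-Frege proof $\sigma$ of either $\Theta^*_n$ or $C$ with $l(\sigma) \le l(\pi)^{O(1)}$. The second option is impossible, since $C \notin L$ and Frege systems are sound. So $\sigma$ proves $\Theta^*_n$. Because $L \subseteq \mathsf{LJ_!}$, Theorem~\ref{Lem: Equivalence Of Frege} lets us convert $\sigma$ into an $\mathsf{LJ_!}$-Frege proof of $\Theta^*_n$ with polynomially many lines.

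\textbf{Main obstacle.} To use the hardness result we need a short $\mathsf{LJ_!}$-Frege derivation of $\Theta^{\bot}_n$ from $\Theta^*_n$. This is the step Jalali establishes; it is the content of the cited Corollary~5.4 in~\cite{Raheleh}, which states the hardness for $\Theta^*_n$, evidently the intended reading of ``$\Theta^\bot_n$'' there. If it has to be reproved, the route is as follows:
\begin{itemize}
\item In $\mathbf{LJ_!}$, $*$ collapses to $\wedge$ and $X \wedge 1$ to $X$ by polynomial-size proofs, so $\Theta^*_n$ and $\Theta_n$ are feasibly equivalent.
\item Substituting $\neg s$ for $s'$ and $\neg r$ for $r'$, and using the excluded-middle-free instances that Jeřábek handles, connects $\Theta_n$ with $\Theta^{\bot}_n$.
\end{itemize}
Granting this, $\Theta^*_n$ needs $2^{\Omega(n^{1/4})}$ lines in $\mathsf{LJ_!}$-Frege. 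Hence $l(\pi)^{O(1)} \ge 2^{\Omega(n^{1/4})}$, so $l(\pi) \ge 2^{n^{\Omega(1)}}$, which is the required exponential gap.
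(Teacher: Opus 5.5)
Your proposal is correct and follows the paper's proof essentially step by step. You pick $C\in M\setminus L$ and use feasible disjunction introduction to get an $n^{O(1)}$ upper bound for $\Theta^*_n\vee C$. For the lower bound you apply the feasible disjunction property of $L$-Frege, discard the $C$ branch since $C\notin L$, line-simulate in $\mathsf{LJ_!}$-Frege, and invoke Jalali's bound. Your reading of Theorem~\ref{Thm: exp lower bound FLe} as a lower bound for $\Theta^*_n$, rather than the literally printed $\Theta^{\bot}_n$, is exactly how the paper uses it, so the sketch in your ``main obstacle'' paragraph is optional.
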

\begin{proof}
Using $M \nsubseteq L$, choose a formula $B \in M-L$. As $B \in M$, the formula $B$ has a $P$-proof, and since $B$ is fixed, the size of this proof is constant. First, note that the formula $\Theta^*_n \vee B$ is provable in $\mathsf{FL_e}$. Moreover, since $P$ feasibly admits disjunction introduction and $\Theta^*_n$ has size $n^{O(1)}$, the formula $\Theta^*_n \vee B$ has a $P$-proof of size $n^{O(1)}$.

For the lower bound, let $\pi$ be a $L$-Frege proof of $\Theta^{*}_n \vee B$. By the feasible disjunction property of $G$, Theorem \ref{Thm: MainFrege}, there is either a $L$-Frege proof of $\Theta^{*}_n$ or $B$ with $l(\pi)^{O(1)}$ many lines. As $B \notin L$, the latter case is impossible. Therefore, $ \Theta^{*}_n$ has an $L$-Frege proof with $l(\pi)^{O(1)}$ many lines. Since $L \subseteq \mathsf{LJ_!}$, we have $L-\mathbf{F} \leq_l \mathsf{LJ_!}-\mathbf{F}$, and hence $\Theta^{*}_n$ has an $\mathsf{LJ_!}$-Frege proof with $l(\pi)^{O(1)}$ many lines. By Theorem \ref{Thm: exp lower bound FLe}, we reach $l(\pi) \geq 2^{n^{\Omega(1)}}$.
\end{proof}

\begin{corollary}\label{cor: Main}
Let $L \subseteq \mathsf{LJ_!}$ be a structural logic and $M \supset L$ be a logic. Then $L$-Frege is exponentially weaker than $M$-Frege. In particular, this holds for any $L \in \{\mathsf{FL_e}X, \mathsf{IMALL}X, \mathsf{ILL}X\}$, where $X$ is a set of rules from Example \ref{exam StructuralRules}.
\end{corollary}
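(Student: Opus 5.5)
The plan is to derive Corollary~\ref{cor: Main} directly from Theorem~\ref{thm: MainLowerBound} by instantiating $P$ as $M$-Frege.

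\textbf{Checking the hypotheses.} First I check the three hypotheses of Theorem~\ref{thm: MainLowerBound}.
\begin{itemize}
\item Since $M \supset L$ is a proper extension, we have $M \nsubseteq L$.
\item Since $L$ is structural, it contains $\mathsf{FL_e}$, so $\mathsf{FL_e} \subseteq L \subseteq M$.
\item $M$-Frege must feasibly admit disjunction introduction. This holds because $M \supseteq \mathsf{FL_e}$, so $M$ contains the axioms $A_i \to (A_0 \vee A_1)$; the remark in the Example preceding the definition of exponential weakness covers exactly this. Concretely, from an $M$-Frege proof $\pi$ of $A_i$, append the instance $(\rightarrow\!\vee)_i$ (derivable in any Frege system for $M$ by strong completeness, with a fixed-size derivation schema substituted, hence size $(|A_0|+|A_1|)^{O(1)}$) and one application of modus ponens.
\end{itemize}
One subtlety is that an arbitrary Frege system for $M$ need not contain $(\rightarrow\!\vee)_i$ literally. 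However, strong completeness gives a fixed derivation of $p_0 \vee p_1$ from $p_i$ using only schematic rules. Substituting $A_0, A_1$ for $p_0, p_1$ yields a derivation of size polynomial in $|A_0|+|A_1|$, and this can be concatenated with $\pi$. Theorem~\ref{thm: MainLowerBound} then gives that $L$-Frege is exponentially weaker than $M$-Frege.

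\textbf{The ``in particular'' clause.} Here I must verify that each $L \in \{\mathsf{FL_e}X, \mathsf{IMALL}X, \mathsf{ILL}X\}$, with $X$ a set of rules from Example~\ref{exam StructuralRules}, is a structural logic contained in $\mathsf{LJ_!}$.
\begin{itemize}
\item \emph{Structurality.} The rules in Example~\ref{exam StructuralRules} are structural rules by definition, so the calculus is structural in the sense of Definition~\ref{def: structural calculus}.
\item \emph{Containment in $\mathsf{LJ_!}$.} By Lemma~\ref{lem: RuleToAxiom}, each rule is equivalent to an axiom, so it suffices to show that each rule is admissible (indeed derivable) in $\LJajab$. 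In $\LJajab$, every sequent $\Gamma, \bar\mu \Rightarrow \Delta$ with repeated atoms can be contracted or weakened freely. So $(Lw)$, $(Rw)$, $(Lc)$, $(\mathrm{exp})$, $(\mathrm{knot}^n_m)$ for $n\geq 1$, $(\mathrm{min})$, $(\mathrm{mix})$ and $(wc)$ (via cut and contraction) are all derivable.
\end{itemize}

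\textbf{Main obstacle.} The hardest step is to make sure none of the listed rules takes us outside intuitionistic logic. $(\mathrm{knot}^0_m)$ with $n=0$ would yield $\Gamma \Rightarrow \Delta$ from $\Gamma,p^m\Rightarrow\Delta$. That is unsound intuitionistically, so the statement should be read with such degenerate instances excluded, i.e.\ only $n\geq 1$. Similarly, $(\mathrm{anl}\text{-}\mathrm{knot}^n_m)$ needs $n\geq1$ and $m\ge1$, handled via weakening and contraction on the $p_i$. I would state this restriction explicitly and otherwise verify each rule by a short $\LJajab$ derivation.
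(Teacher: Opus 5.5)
Your proposal is correct and is essentially the paper's proof. You instantiate Theorem~\ref{thm: MainLowerBound} with $P$ the $M$-Frege system, which feasibly admits disjunction introduction since $M \supseteq L \supseteq \mathsf{FL_e}$, and you note that $M \nsubseteq L$ because the extension is proper. Your extra check of the ``in particular'' clause goes beyond what the paper spells out, and it rightly flags that degenerate instances such as $(\mathrm{knot}^0_m)$ with $m\geq 1$ leave $\mathsf{LJ_!}$; one small correction is that $(\mathrm{anl}\text{-}\mathrm{knot}^n_0)$ is just weakening, so it needs no exclusion.
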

\begin{proof}
It is a direct consequence of Theorem \ref{thm: MainLowerBound} and the fact that, for any $M \supseteq \mathsf{FL_e}$, the system $M$-Frege feasibly admits disjunction introduction.
\end{proof}

Corollary \ref{cor: Main} shows that, over any structural logic below intuitionistic logic, extending the logic can yield an exponential reduction in the number of lines in Frege proofs, even for formulas already provable in $\mathsf{FL_e}$.

\begin{corollary}
Let $X, Y \subseteq \{w,c\}$ with $Y \nsubseteq X$. Then $\mathbf{FL_e}X$, and hence its cut-free version, is exponentially weaker than cut-free $\mathbf{FL_e}Y$. The same holds with $\mathbf{IMALL}$ or $\mathbf{ILL}$ in place of $\mathbf{FL_e}$.
\end{corollary}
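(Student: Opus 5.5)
We derive this corollary from Theorem~\ref{thm: MainLowerBound} with $L=\mathsf{FL_e}X$ and $M=\mathsf{FL_e}Y$, taking $P$ to be cut-free $\mathbf{FL_e}Y$. The ``hence its cut-free version'' clause is trivial once the cut-ful calculus is handled, since any cut-free $\mathbf{FL_e}X$-proof is also an $\mathbf{FL_e}X$-proof with the same number of lines. Moreover, by Theorem~\ref{Lem: Equivalence Of Frege}, $\mathbf{FL_e}X$ is line-equivalent to $\mathsf{FL_e}X$-Frege, so a lower bound for $\mathsf{FL_e}X$-Frege transfers to the calculus $\mathbf{FL_e}X$ up to a polynomial, which keeps it of the form $2^{n^{\Omega(1)}}$. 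It therefore remains to check the hypotheses of Theorem~\ref{thm: MainLowerBound} in this instance.

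First, $\mathbf{FL_e}X$ is a structural calculus, because $(Lw)$, $(Rw)$ and $(Lc)$ are structural rules (Example~\ref{exam StructuralRules}). Its logic lies inside $\mathsf{LJ_u}\subseteq\mathsf{LJ_!}$, because every structural rule in $X$ is derivable in $\mathbf{LJ_u}$. Second, $\mathsf{FL_e}\subseteq\mathsf{FL_e}Y$ holds trivially. Third, cut-free $\mathbf{FL_e}Y$ is complete for $\mathsf{FL_e}Y$ by cut elimination, which is standard for $\mathbf{FL_e}$ extended by any subset of $\{w,c\}$. It also feasibly admits disjunction introduction: one application of $(R\vee_i)$ suffices, as in the preceding Example.

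The remaining hypothesis is $\mathsf{FL_e}Y\nsubseteq\mathsf{FL_e}X$, and this is the only step requiring an argument. Since $Y\nsubseteq X$, either $w\in Y\setminus X$ or $c\in Y\setminus X$. If $w\in Y\setminus X$, we take the axiom $p\to(q\to p)$. It is provable with weakening, but not in $\mathsf{FL_e}c$, which is a subset of $\mathsf{FL_{ec}}$. To show this, we apply cut-free $\mathbf{FL_{ec}}$ to the sequent $p,q\Rightarrow p$. Every rule of cut-free $\mathbf{FL_{ec}}$ preserves the property that each atom appearing in the antecedent also appears in the succedent, whereas here $q$ occurs only on the left. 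Alternatively, one can use a contraction-friendly algebraic countermodel, such as a Sugihara-type model. If $c\in Y\setminus X$, we take $(p\to(p\to q))\to(p\to q)$. It is not in $\mathsf{FL_{ew}}$: a cut-free search on $p\to(p\to q),p\Rightarrow q$ fails because only one copy of $p$ is available. A finite model also works, namely the three-element \L ukasiewicz chain, where the formula fails with $p=1/2$ and $q=0$.

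These $\mathbf{FL_e}$ counterexamples transfer to $\mathbf{IMALL}$ and $\mathbf{ILL}$, because both are conservative over the $\{*,\to\}$-fragment, for instance by cut elimination and the subformula property. The $!$-free countermodels extend by interpreting $\top$ and $\bot$ as the top and bottom elements and $!$ as the identity, or via the forgetful map $f$. I expect the main obstacle to be making these non-derivability claims rigorous uniformly for all three bases. Cut elimination together with the subformula property seems the cleanest route, and I would use it first.
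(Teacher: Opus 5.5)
Your reduction is the one the paper intends. You apply Theorem~\ref{thm: MainLowerBound} with $L=\mathsf{FL_e}X$, $M=\mathsf{FL_e}Y$ and $P$ cut-free $\mathbf{FL_e}Y$, where one $(R\vee_i)$ gives feasible disjunction introduction. You then pass from $\mathsf{FL_e}X$-Frege to $\mathbf{FL_e}X$ by line-equivalence, and note that cut-free proofs are proofs. The only hypothesis the paper leaves implicit is $\mathsf{FL_e}Y\nsubseteq\mathsf{FL_e}X$. Your separating formulas are the right ones and your {\L}ukasiewicz computation is correct, but two supporting claims are false as stated. Both are repairable.

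First, the invariant ``every atom of the antecedent occurs in the succedent'' is not preserved by cut-free $\mathbf{FL_{ec}}$: $(L\to)$ yields $p,p\to q\Rightarrow q$ and $(L\wedge_1)$ yields $p\wedge q\Rightarrow p$. Replace it by a plain backward search. By the subformula property, every sequent in a cut-free proof of $p,q\Rightarrow p$ consists only of occurrences of $p$ and $q$. The only rules with such a conclusion are (id) and $(Lc)$; in the $\mathbf{IMALL}$ and $\mathbf{ILL}$ versions, the axioms $(\top)$, $(\bot)$ and the $!$-rules cannot produce such a sequent either. $(Lc)$ keeps $q$ in the antecedent, while (id) needs the antecedent to be exactly $p$, so induction on height shows there is no proof.

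Second, neither ``$!$ as the identity'' nor the forgetful map $f$ transfers your countermodels to $\mathbf{ILL}X$. Under both, $(W!)$ becomes full weakening and $(C!)$ becomes full contraction. So any model refuting $p\to(q\to p)$ makes $(W!)$ unsound, and any model refuting $(p\to(p\to q))\to(p\to q)$ makes $(C!)$ unsound. Correspondingly, $f$ sends $\mathbf{ILLC}$- and $\mathbf{ILLW}$-proofs only into $\mathbf{LJ_b}$, where both formulas are provable; the paper uses $f$ only from $\LKajab$ to $\LKb$. Your primary route does work: conservativity of cut-free $\mathbf{IMALL}X$ and $\mathbf{ILL}X$ over $!$-free formulas via the subformula property, and the backward search above applies verbatim.

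If you prefer models, $!$ must be a genuine interior operator. On the three-element {\L}ukasiewicz chain take $!x=1$ for $x=1$ and $!x=0$ otherwise. On the three-element Sugihara chain $\{-1,0,1\}$ with unit $0$ take $!x=0$ for $x\geq 0$ and $!(-1)=-1$; this chain refutes $p\to(q\to p)$ at $p=0$, $q=1$.
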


Interestingly, this shows that $\mathbf{FL_{e}}$ is exponentially weaker than cut-free $\mathbf{FL_{ew}}$, that is, weakening can be exponentially more powerful than cut, even for $\mathbf{FL_e}$-provable formulas. A similar phenomenon occurs when adding contraction or weakening to any $\mathbf{FL_e}X$ that does not already contain the corresponding structural rule. \\

\noindent \textbf{Acknowledgment.} We wish to thank the financial support of the Dutch Research Council (NWO) project OCENW.M.22.258.

\bibliographystyle{plain}
\bibliography{Ref}

@article{Raheleh,
title = {Proof complexity of substructural logics},
journal = {Annals of Pure and Applied Logic},
volume = {172},
number = {7},
pages = {102972},
year = {2021},
issn = {0168-0072},
doi = {https://doi.org/10.1016/j.apal.2021.102972},
url = {https://www.sciencedirect.com/science/article/pii/S0168007221000300},
author = {Raheleh Jalali}
}

@article{jerabek,
  title={Substitution Frege and extended Frege proof systems in non-classical logics},
  author={Je{\v{r}}{\'a}bek, Emil},
  journal={Annals of Pure and Applied Logic},
  volume={159},
  number={1-2},
  pages={1--48},
  year={2009},
  publisher={Elsevier}
}

@article{Cook,
  title={The relative efficiency of propositional proof systems},
  author={Cook, Stephen A and Reckhow, Robert A},
  journal={The journal of symbolic logic},
  volume={44},
  number={1},
  pages={36--50},
  year={1979},
  publisher={Cambridge University Press}
}

@book{Ono,
  title={Residuated lattices: an algebraic glimpse at substructural logics},
  author={Galatos, Nikolaos and Jipsen, Peter and Kowalski, Tomasz and Ono, Hiroakira},
  volume={151},
  year={2007},
  publisher={Elsevier}
}

@article{Avron,
  title={The semantics and proof theory of linear logic},
  author={Avron, Arnon},
  journal={Theoretical Computer Science},
  volume={57},
  number={2-3},
  pages={161--184},
  year={1988},
  publisher={Elsevier}
}

@book{Troelstra,
author = {Troelstra, Anne Sjerp},
address = {Stanford},
booktitle = {Lectures on linear logic},
isbn = {0937073784},
lccn = {lc 91038902},
publisher = {Center for the Study of Language and Information},
series = {CSLI lecture notes; no.29},
title = {Lectures on linear logic },
year = {1992}
}

@article{Hrubes,
title = {On lengths of proofs in non-classical logics},
journal = {Annals of Pure and Applied Logic},
volume = {157},
number = {2},
pages = {194-205},
year = {2009},
note = {Kurt Gödel Centenary Research Prize Fellowships},
issn = {0168-0072},
doi = {https://doi.org/10.1016/j.apal.2008.09.013},
url = {https://www.sciencedirect.com/science/article/pii/S0168007208001292},
author = {Pavel Hrubeš}
}

@book{Krajicek,
  address = {New York, NY},
  author = {Jan Kraj\'{i}{\v{c}}ek},
  editor = {},
  publisher = {Cambridge University Press},
  title = {Proof Complexity},
  year = {2019},
  series = {Encyclopedia of Mathematics and its Applications},
  volume = {170},
  isbn = {9781108416849}
}

@article{AmirProofComp,
  title={Proof Complexity and Feasible Interpolation},
  author={Akbar Tabatabai, Amirhossein},
  journal={arXiv preprint arXiv:2505.03002},
  year={2025}
}

@article{Pudlak,
  title={On the complexity of intuitionistic propositional calculus},
  author={Pudl{\'a}k, Pavel},
  journal={Sets and Proofs},
  volume={258},
  pages={197},
  year={1999},
  publisher={Cambridge University Press}
}

@article{PudlakBuss,
  title={On the computational content of intuitionistic propositional proofs},
  author={Buss, Samuel R and Pudl{\'a}k, Pavel},
  journal={Annals of Pure and Applied Logic},
  volume={109},
  number={1-2},
  pages={49--64},
  year={2001},
  publisher={Elsevier}
}

@article{BussMints,
  title={The complexity of the disjunction and existential properties in intuitionistic logic},
  author={Buss, Samuel  R and Mints, Grigori},
  journal={Annals of Pure and Applied Logic},
  volume={99},
  number={1-3},
  pages={93--104},
  year={1999},
  publisher={Elsevier}
}

@article{krajivcekFeasible,
  title={Lower bounds to the size of constant-depth propositional proofs},
  author={Kraj{\'\i}{\v{c}}ek, Jan},
  journal={The Journal of Symbolic Logic},
  volume={59},
  number={1},
  pages={73--86},
  year={1994},
  publisher={Cambridge University Press}
}

@article{krajivcekfeasible2,
  title={Interpolation theorems, lower bounds for proof systems, and independence results for bounded arithmetic},
  author={Kraj{\'\i}{\v{c}}ek, Jan},
  journal={The Journal of Symbolic Logic},
  volume={62},
  number={2},
  pages={457--486},
  year={1997},
  publisher={Cambridge University Press}
}

@article{Hrubes1,
  title={A lower bound for intuitionistic logic},
  author={Hrube{\v{s}}, Pavel},
  journal={Annals of Pure and Applied Logic},
  volume={146},
  number={1},
  pages={72--90},
  year={2007},
  publisher={Elsevier}
}

@article{Hrubes2,
  title={Lower bounds for modal logics},
  author={Hrube{\v{s}}, Pavel},
  journal={The Journal of Symbolic Logic},
  volume={72},
  number={3},
  pages={941--958},
  year={2007},
  publisher={Cambridge University Press}
}

@article{Haken,
  title={The intractability of resolution},
  author={Haken, Armin},
  journal={Theoretical computer science},
  volume={39},
  pages={297--308},
  year={1985},
  publisher={Elsevier}
}

@article{Cutting2,
  title={Lower bounds for resolution and cutting plane proofs and monotone computations},
  author={Pudl{\'a}k, Pavel},
  journal={The Journal of Symbolic Logic},
  volume={62},
  number={3},
  pages={981--998},
  year={1997},
  publisher={Cambridge University Press}
}

@article{tabatabai2025,
  title={Universal proof theory: Feasible admissibility in intuitionistic modal logics},
  author={Akbar Tabatabai, Amirhossein and Jalali, Raheleh},
  journal={Annals of Pure and Applied Logic},
  volume={176},
  number={2},
  pages={103526},
  year={2025},
  publisher={Elsevier}
}

@inproceedings{ciabattoni2008axioms,
  title={From axioms to analytic rules in nonclassical logics},
  author={Ciabattoni, Agata and Galatos, Nikolaos and Terui, Kazushige},
  booktitle={2008 23rd Annual IEEE Symposium on Logic in Computer Science},
  pages={229--240},
  year={2008},
  organization={IEEE}
}

@article{ciabattoni2012algebraic,
  title={Algebraic proof theory for substructural logics: cut-elimination and completions},
  author={Ciabattoni, Agata and Galatos, Nikolaos and Terui, Kazushige},
  journal={Annals of Pure and Applied Logic},
  volume={163},
  number={3},
  pages={266--290},
  year={2012},
  publisher={Elsevier}
}

@article{tabatabai2026proof,
  title={Proof Complexity of Linear Logics},
  author={Akbar Tabatabai, Amirhossein and Jalali, Raheleh},
  journal={arXiv preprint arXiv:2601.22393},
  year={2026}
}
\end{document}